\algnewcommand\algorithmicinput{\textbf{Input:}}
\algnewcommand\algorithmicoutput{\textbf{Output:}}
\algnewcommand\Input{\item[\algorithmicinput]}
\algnewcommand\Output{\item[\algorithmicoutput]}
\newtheorem{theorem}{Theorem}[section]
\newtheorem{observation}[theorem]{Observation}
\newtheorem{lemma}[theorem]{Lemma}
\newtheorem{corollary}[theorem]{Corollary}
\newtheorem{proposition}[theorem]{Proposition}
\DeclareMathOperator{\head}{head}
\DeclareMathOperator{\dist}{dist}
\DeclareMathOperator{\ind}{i}
\DeclareMathOperator{\lefti}{li}
\DeclareMathOperator{\righti}{ri}
\newcommand{\pvg}{terrain visibility graph}
\newcommand{\Pvg}{Terrain visibility graph}
\newcommand{\xprop}{X-property}
\newcommand{\barprop}{bar-property}
\tikzset{
	vertex/.style={circle,draw, inner sep = 1.5pt},
	edge/.style={, color=black},
	curveedge/.style={color=black,out=90,in=90},
	curvepath/.style={
		color=black,out=90,in=90,decorate,
		decoration={
			snake ,amplitude=0.25mm, segment length=1mm
		},
	},
}
\title{Advancing Through Terrains%
\footnote{A shorter version of this paper appeared in Discrete \& Computational Geometry \cite{TerrainPathJournal}.}
}
\author[1]{Vincent Froese}
\author[1]{Malte Renken\footnote{Supported by the DFG project NI~369/17-1.}}
\affil[1]{\small
  Algorithmics and Computational Complexity, Faculty~IV, TU Berlin, Berlin, Germany,\protect\\
  \{vincent.froese, m.renken\}@tu-berlin.de}
\begin{document}

\maketitle

\begin{abstract} 

  We study terrain visibility graphs, a well-known graph class closely related to polygon visibility graphs in computational geometry, for which a precise graph-theoretical characterization is still unknown. Over the last decade, terrain visibility graphs attracted attention in the context of time series analysis with various practical applications in areas such as physics, geography and medical sciences.

  We make progress in understanding terrain visibility graphs by providing several graph-theoretic
  results. For example, we show that they cannot contain antiholes of size larger than five.
  Moreover, we obtain two algorithmic results.
  We devise a fast output-sensitive shortest path algorithm on terrain-like graphs and a polynomial-time algorithm for \textsc{Dominating Set} on special terrain visibility graphs (called funnel visibility graphs).
  
\medskip

\noindent\textbf{Keywords:} computational geometry, time series visibility graphs, funnel visibility graphs, graph classes, polynomial-time algorithms
\end{abstract}

\section{Introduction}

Visibility graphs are a fundamental concept in computational geometry.
For a given set of geometrical objects (e.g.\ points, segments, rectangles, polygons) they encode which objects are visible to each other.
To this end, the objects form the vertices of the graph and there is an edge between two vertices if and only if the two corresponding objects ``can see each other'' (for a specified notion of visibility).
Visibility graphs are well-studied from a graph-theoretical perspective and 
find applications in many real-world problems occurring in different fields such as physics \cite{Elsner09,Liu10,DD12}, robotics \cite{deBerg2008}, object recognition \cite{Shapiro1979}, or medicine \cite{Ahmadlou2010,Gao16}.

In this work, we study visibility graphs of 1.5-dimensional terrains (that is, $x$-monotone polygonal chains).
This graph class has been studied since the 90's~\cite{AbelloStaircaseI} and found numerous applications in analyzing and classifying time series in recent years~\cite{Lacasa4972,Elsner09,Liu10,DD12,Stephen15,Gao16}.
However, a precise graph-theoretical characterization is an open problem.
While a necessary condition for \pvg{}s is known, it is open whether this is also sufficient (see \Cref{sec:prelims}).

In \Cref{sec:props}, we make progress towards a better understanding of terrain visibility graphs by showing that they
do not contain antiholes of size larger than five.
Moreover, we show that terrain visibility graphs do not include all unit interval graphs (which are hole-free).
Furthermore, we give an example showing that terrain visibility graphs are not unigraphs, that is, they are not uniquely determined by their degree sequence up to isomorphism.

Besides these graph-theoretical findings, our main contributions are two algorithmic results:
In \Cref{sec:shortestpath}, we develop an algorithm computing shortest paths in arbitrary induced subgraphs of terrain visibility graphs (in fact, the algorithm even works for a more general class of graphs known as \emph{terrain-like} graphs~\cite{AFKS19} satisfying a weaker condition) in~$\bigO(d^*\log \Delta)$ time, where~$d^*$ is the length of the shortest path and $\Delta$ is the maximum degree (also~$\bigO(d^*)$ time is possible with an $\bigO(n^2)$-time preprocessing).
\Cref{sec:dominatingset} presents an~$\bigO(n^4)$-time algorithm for \textsc{Dominating Set} on a known subclass of terrain visibility graphs called funnel visibility graphs.

\paragraph*{Related Work.} For a general overview on visibility graphs and related problems see the survey by \textcite{GG13}.
As regards the origin of terrain visibility graphs, \textcite{AbelloStaircaseI} studied visibility graphs of staircase polygons which are closely related to terrain visibility graphs as \textcite{Colley92} showed that they are in one-to-one correspondence with the core induced subgraphs of staircase polygon visibility graphs.
\textcite{AbelloStaircaseI} described three necessary properties that are satisfied by every terrain visibility graph.
Recently, \textcite{AGKSW20} showed that these properties are not sufficient.
\textcite{Saeedi2015} gave a simpler proof for the results of \textcite{AbelloStaircaseI}.
\textcite{AbelloStaircaseUniform} showed that visibility graphs of staircase polygons with unit step-length can be recognized via linear programming and that not all staircase polygon visibility graphs can be represented with unit steps.
\textcite{ChoiFunnels} studied another subclass of terrain visibility graphs called \emph{funnel visibility graphs} which is linear-time recognizable (also studied by~\textcite{ColleyTowers97}).

\textcite{Lacasa4972} introduced terrain visibility graph (under the name of \emph{time series visibility graphs}) in the context of time series analysis.
A variant (called \emph{horizontal visibility graphs}) where two vertices can only see each other horizontally was later introduced by \textcite{LLBL09}.
Horizontal visibility graphs were fully characterized by~\textcite{GUTIN20112421} who showed that these are exactly the outerplanar graphs with a Hamiltonian path.
Moreover, \textcite{Luque2017} showed that certain canonical horizontal visibility graphs are uniquely determined by their degree sequence.

Notably, the \textsc{Terrain Guarding} problem, that is, selecting~$k$ terrain points that guard the whole terrain (which is closely related to \textsc{Dominating Set} on terrain visibility graphs) has been extensively studied in the literature and is known to be NP-hard~\cite{KingKrohnHardness} even on \emph{orthogonal terrains} \cite{BG18}.
It has recently been studied from a parameterized perspective \cite{AFKSZ18} and also from an approximation point of view~\cite{AFKS19}.

\section{Preliminaries}\label{sec:prelims}
We assume the reader to be familiar with basic concepts and classes of graphs (refer to \textcite{Brandstadt99} for an overview).

A (1.5-dimensional) \emph{terrain} is an $x$-monotone polygonal chain in the plane defined by a set $V$ of \emph{terrain vertices} with pairwise different~$x$-coordinates.
For two terrain vertices $p, q$, we write $p < q$ if $p$ is ``left'' of~$q$, that is, $p^x < q^x$, where $p^x$ denotes the~$x$-coordinate of~$p$.
Furthermore, we define $[p, q] := \{x\mid p \leq x \leq q\}$.
The corresponding \emph{terrain visibility graph} is defined on the set of terrain vertices where two vertices~$p$ and~$q$ are adjacent if and only if they \emph{see each other}, that is, there is no vertex between them that lies on or above the line segment connecting $p$ and~$q$ (see \Cref{fig:example_tvg} for an example).
Formally, there exists an edge~$\{p,q\}$, for $p < q$, if and only if all terrain vertices $r$ with $p < r < q$ satisfy
\begin{align*}
r^y < p^y + (q^y - p^y) \frac{r^x - p^x}{q^x - p^x}\,.
\end{align*}

\begin{figure}[t]
  \centering
  \begin{tikzpicture}
    \tikzset{every node/.style={vertex}}
    \node (1) at (0,2) {};
    \node (2) at (1,0) {};
    \node (3) at (2,1) {};
    \node (4) at (2.5,-1) {};
    \node (5) at (3.5,4) {};
    \node (6) at (5,3) {};
    \draw (1) -- (2) -- (3) -- (4) -- (5) -- (6);
    \draw (1) -- (3);
    \draw (1) -- (5);
    \draw (2) -- (5);
    \draw (3) -- (5);
  \end{tikzpicture}\hspace{2cm}
  \begin{tikzpicture}[baseline=-2cm]
    \tikzset{every node/.style={vertex}}
    \node (1) at (1,0) {};
    \node (2) at (2,0) {};
    \node (3) at (3,0) {};
    \node (4) at (4,0) {};
    \node (5) at (5,0) {};
    \node (6) at (6,0) {};
    \draw (1) -- (2) -- (3) -- (4) -- (5) -- (6);
    \draw[curveedge] (1) to (3);
    \draw[curveedge] (1) to (5);
    \draw[curveedge] (2) to (5);
    \draw[curveedge] (3) to (5);
  \end{tikzpicture}
  \caption{A terrain visibility graph drawn in two different ways (with a corresponding terrain on the left).}
  \label{fig:example_tvg}
\end{figure}
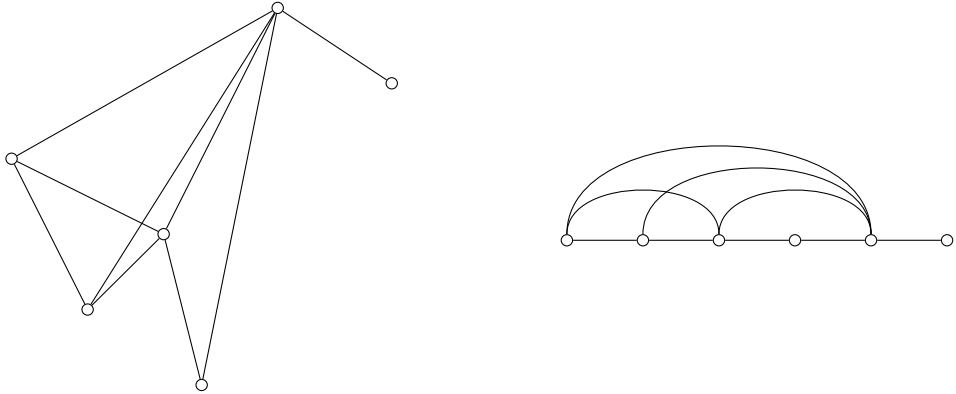

\textcite{Hershberger87} gave an algorithm to compute the visibility graph of a given terrain with a running time that is linear in the size of the graph.

Let $p \in V$ be a terrain vertex and $q, r$ be the vertices immediately to its left and right.
We call $p$ \emph{convex} if $q$ and $r$ see each other.
Otherwise it is called \emph{reflex}.
The leftmost and rightmost vertex of a terrain are neither convex nor reflex.

Clearly, every terrain visibility graph contains a Hamiltonian path along the order of the terrain vertices.
Moreover, the visibility graph of a terrain is invariant under some affine transformation, in particular translation, scaling,
or vertical shearing (i.e., a map $(x, y) \mapsto (x, mx+y)$ with $m \in \RR$).

The following are two elementary properties of \pvg{s} (we stick to the names coined by \textcite{Saeedi2015}).
Two edges $\{p, q\}$ and $\{r, s\}$ are said to be \emph{crossing} if the corresponding line segments cross, i.e.\ if $p < r < q < s$.
The first property states that two crossing edges imply the existence of another edge.

\begin{center}
\begin{minipage}{0.95\textwidth}
\textbf{\xprop{}:}
Let $p, q, r, s$ be four terrain vertices with $p < q < r < s$.
If $p$ sees $r$ and $q$ sees $s$, then $p$ sees $s$.
\end{minipage}
\end{center}

The X-property holds because $q$ must lie below the line segment through $p, r$ while $r$ must lie below the line segment through $q, s$.
Thus, any point on the line segment between $p$ and $s$ is above one (or both) of these two line segments, so it cannot be obstructed by any vertex.

\begin{center}
  \begin{minipage}{0.95\textwidth}
\textbf{\barprop{}:}
If $p, q$ are two non-consecutive terrain vertices that can see each other,
then there is a vertex $r$ between $p$ and $q$ that can see both of them.
\end{minipage}
\end{center}

The bar-property is immediate if we first apply a vertical shear mapping such that $p^y = q^y$.
Then, $r$ is simply the vertex between $p$ and $q$ which has maximal $y$-coordinate.

Any vertex-ordered graphs with a Hamiltonian path following the order which satisfies the \xprop{} and \barprop{} is called \emph{persistent}.
\textcite{AGKSW20} constructed a persistent graph which is not a terrain visibility graph, thus proving that persistent graphs form a proper superset of \pvg{}s.

\section{Graph Properties}\label{sec:props}

In this section we prove some graph-theoretical results.
These mostly apply not only to \pvg{}s but to persistent graphs.

\subsection{Induced Subgraphs}

It is known that terrain visibility graphs can contain an induced~$C_4$ \cite{Everett90,Ghosh97}.
It is further known that, whenever four vertices $p_1, p_2, p_3, p_4$ of a persistent graph form an induced~$C_4$ (in that order), then they cannot satisfy $p_1 < p_2 < p_3 < p_4$ since this violates either the \xprop{} or the \barprop{}~\cite{Ghosh97,Abello04}. This leads to the following simple observation.

\begin{observation}\label{thm:C4}
	If four vertices $p_1, p_2, p_3, p_4$ of a persistent graph form an induced $C_4$ (in that order),
	then the two leftmost of these are either $p_1$ and $p_3$ or $p_2$ and $p_4$.
\end{observation}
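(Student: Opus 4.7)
The plan is to show that the two leftmost of $p_1,p_2,p_3,p_4$ cannot form an edge of the $C_4$, which forces them to be one of the two diagonals $\{p_1,p_3\}$ or $\{p_2,p_4\}$. The automorphism group of the labeled $C_4$ acts transitively on its four edges, and terrain reflection reverses~$<$; composing these symmetries, I would argue without loss of generality that if the two leftmost vertices form an edge, then that edge is $\{p_1,p_2\}$ with $p_1<p_2$. Under this assumption the remaining vertices $p_3,p_4$ both lie strictly to the right of $p_2$, and I would split into two subcases according to their relative order along the terrain.

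In the first subcase, $p_1<p_2<p_3<p_4$, which is precisely the configuration singled out in the paragraph preceding the observation: the $C_4$-edges together with this ordering violate either the \xprop{} or the \barprop{}, so this subcase cannot occur. In the second subcase, $p_1<p_2<p_4<p_3$, the $C_4$-edges $\{p_1,p_4\}$ and $\{p_2,p_3\}$ cross in the sense of \Cref{sec:prelims}; applying the \xprop{} to the quadruple $(p_1,p_2,p_4,p_3)$ in the roles of $(p,q,r,s)$, the visibilities $p_1$--$p_4$ and $p_2$--$p_3$ force $p_1$ to see $p_3$, contradicting that $\{p_1,p_3\}$ is a diagonal (non-edge) of the induced $C_4$.

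Both subcases produce contradictions. The only mildly delicate point is the initial symmetry reduction: one must verify that combining the dihedral symmetries of the labeled $C_4$ with terrain reflection really does cover every possible ``leftmost edge,'' so that the assumption ``leftmost edge $=\{p_1,p_2\}$ with $p_1<p_2$'' is genuinely without loss of generality. Beyond that bookkeeping, the argument is a straightforward case analysis that reuses the preceding paragraph's \xprop{}/\barprop{} violation and a single fresh application of the \xprop{}.
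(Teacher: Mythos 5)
Your proof is correct and is essentially the paper's own argument: both reduce to the case where the two leftmost vertices are joined by a cycle edge and then split into the monotone ordering $p_1<p_2<p_3<p_4$ (excluded by the cited result) versus the crossing configuration, where a single application of the \xprop{} forces a chord of the induced $C_4$. The only cosmetic difference is your upfront symmetry normalization, which indeed goes through (the dihedral relabellings alone already suffice, without invoking terrain reflection).
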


\begin{proof}
  Assume that the leftmost vertex of~$p_1,p_2,p_3,p_4$ has an edge to the second leftmost vertex.
  If it also has an edge to the rightmost vertex, then this case is equivalent (up to cyclic shifting and order reversal) to the case~$p_1 < p_2 < p_3 < p_4$, which is not possible~\cite{Ghosh97,Abello04}.
  Hence, it has an edge to the second rightmost vertex.
  But then the \xprop{} is violated since there is also an edge from the second leftmost to the rightmost vertex.

  Thus, there cannot be an edge between the two leftmost vertices, which means these are either~$p_1$ and~$p_3$ or~$p_2$ and~$p_4$.
\end{proof}

We can generalize this observation to larger induced cycles: While $C_k$ may appear as induced subgraph for any $k$, its vertices can only occur in a specific order.
For an example consider \cref{fig:C6}. Note that this construction can be generalized to any $k \geq 4$ by changing the number of vertices on the bottom middle path.

\begin{figure}
  \centering
\begin{tikzpicture}[yscale=0.2,xscale=0.3]
	\tikzset{every node/.style={vertex,inner sep=1pt}}
	\node (2) at (0,37) {};
	\node (x1) at (5,32) {};
	\node (6) at (6,12) {};
	\node (5) at (16,14) {};
	\node (4) at (20,14) {};
	\node (3) at (30,12) {}; 
	\node (x2) at (31,32) {};
	\node (1) at (36,37) {};
	
	\foreach \x in {1,...,6}
		\draw (x1) -- (\x) -- (x2);
	\draw (x1) -- (x2);
	\draw[line width=1.5pt] (1) -- (2) -- (3) -- (4) -- (5) -- (6) -- (1);
\end{tikzpicture}
\caption{A \pvg{} containing an induced $C_6$ (bold edges).}
\label{fig:C6}
\end{figure}

We start with the following basic lemma.

\begin{lemma}\label{thm:obstacle_lemma}
Let $G$ be a persistent graph and $p, q, r$ three of its vertices with $p < q < r$.
If $G$ contains edges $\{p,q\}$ and $\{p, r\}$ but not $\{q, r\}$,
then $p$~and~$q$ have a common neighbor $b \in V(G)$ with $q < b < r$.
\end{lemma}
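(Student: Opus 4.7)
The plan is to argue by strong induction on the number of vertices of $G$ strictly between $p$ and $r$; the base case where $q$ is the sole intermediate vertex is vacuous, because then the \barprop{} on $\{p, r\}$ would force $q$ itself to see $r$, contradicting the hypothesis. In the inductive step, since $q$ lies strictly between $p$ and $r$, the vertices $p$ and $r$ are non-consecutive, so the \barprop{} applied to $\{p, r\}$ furnishes some vertex $s$ with $p < s < r$ that sees both $p$ and $r$; the hypothesis that $q$ does not see $r$ immediately forces $s \neq q$. If $s > q$ and $q$ sees $s$, then setting $b := s$ finishes the proof. Otherwise (still with $s > q$), the triple $(p, q, s)$ satisfies the lemma's hypothesis with $s$ in place of $r$ and fewer intermediate vertices, so the inductive hypothesis supplies $b$.

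The delicate case is $s < q$, where the goal is to produce a neighbor of $p$ lying strictly between $q$ and $r$ and then reduce to the previous case. Starting from $s_1 := s$, I iterate the \barprop{} on the edge $\{s_i, r\}$ (valid while $s_i < q$, since then $q$ sits strictly between $s_i$ and $r$), each step yielding $s_{i+1}$ with $s_i < s_{i+1} < r$ that sees both $s_i$ and $r$. No $s_{i+1}$ equals $q$ (else $q$ would see $r$), and the strictly increasing sequence lives in a finite vertex set, so some first index $k$ satisfies $s_k < q < s_{k+1} < r$. Applying the \xprop{} to the ordered quadruple $p < s_k < q < s_{k+1}$ with the known edges $\{p, q\}$ and $\{s_k, s_{k+1}\}$ delivers the edge $\{p, s_{k+1}\}$, which reduces the situation to the previous case with $s_{k+1}$ playing the role of $s$.

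The main obstacle is precisely this second case: a single \barprop{} application may place $s$ on the wrong side of $q$, and extracting a neighbor of $p$ strictly between $q$ and $r$ requires chasing $r$-visibility along a chain of pairwise-adjacent vertices until the chain crosses $q$, then using the \xprop{} to transfer $p$'s adjacency across that chain. Once a neighbor of $p$ in this range is secured, the argument collapses back to the easier case.
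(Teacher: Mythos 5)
Your proof is correct, but it takes a genuinely different route from the paper's. The paper argues by a double extremal choice: it takes $b$ to be the leftmost neighbor of $p$ in $(q, r]$ and $c$ the rightmost neighbor of $b$ in $[p, q]$; if $c \neq q$, one application of the \barprop{} to the edge $\{c, b\}$ yields a common neighbor $d$ with $q < d < b$, and the \xprop{} on $p < c < q < d$ makes $d$ a neighbor of $p$, contradicting the leftmost choice of $b$. Your argument replaces this minimality trick with an explicit well-founded recursion on the number of vertices between $p$ and $r$, combined with a chain of \barprop{} applications $s_1 < s_2 < \cdots$ marching toward $r$ until it first crosses $q$, at which point the \xprop{} applied to $p < s_k < q < s_{k+1}$ transfers $p$'s adjacency across the crossing edge. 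Both proofs rest on the same two ingredients (the \barprop{} to generate intermediate visible vertices, the \xprop{} to propagate $p$'s adjacency rightward past $q$); the paper's version is shorter and induction-free, while yours is more constructive, isolating cleanly the key reduction step of finding \emph{some} neighbor of $p$ in the open interval $(q,r)$ and then shrinking the interval. The details all check out: the base case is indeed vacuous, every $s_i$ sees $r$ and hence differs from $q$, the chain is strictly increasing and bounded by $r$ so it terminates, and each recursive call strictly decreases the number of vertices between $p$ and the right endpoint.
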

\begin{proof}
Pick $b$ as the leftmost neighbor of $p$ which satisfies $q < b \leq r$.
Pick further $c$ as the rightmost neighbor of $b$ which satisfies $p \leq c \leq q$.
If $c = q$ then we are done, so assume for contradiction $c \neq q$.
By the \barprop{}, $c$~and~$b$ must have a common neighbor $d$ with $c < d < b$, and by choice of $c$ we have $q < d$.
If $p \neq c$, then we can apply the \xprop{} to $p < c < q < d$ to obtain $\{p, d\} \in E(G)$.
(If $p = c$ then this is obvious.)
But this means that $d$ contradicts the fact that $b$ was chosen leftmost.
\end{proof}

For the vertex ordering of induced cycles, we can now derive the following.

\begin{lemma}\label{thm:hole_lemma}
Let $G$ be a persistent graph and let $p_1, \dots, p_k \in V(G)$ be any $k$ vertices that form an induced cycle (in this order) with $k \geq 4$.
If $p_2$ is the leftmost of these, then $p_4, \dots, p_k$ must all be to the left of both, $p_1$ and $p_3$.
\end{lemma}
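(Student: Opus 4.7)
The plan is to prove the claim by contradiction, treating the two required inequalities $p_i < p_1$ and $p_i < p_3$ (for $i \in \{4, \ldots, k\}$) separately. For $k = 4$, \Cref{thm:C4} does the job immediately, since $p_2$ being leftmost forces the two leftmost of the induced $C_4$ to be $\{p_2, p_4\}$. I therefore concentrate on $k \geq 5$.

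For showing $p_i < p_1$, I would let $m$ be the largest index in $\{4, \ldots, k\}$ with $p_m > p_1$. When $m < k$, maximality gives $p_{m+1} < p_1$, and the \xprop{} applied to the ordering $p_2 < p_{m+1} < p_1 < p_m$ (using the edges $\{p_2, p_1\}$ and $\{p_{m+1}, p_m\}$) produces the forbidden edge $\{p_2, p_m\}$. When $m = k$, the same scheme works with $p_{k-1} < p_1$ in place of $p_{m+1}$ and the edge $\{p_{k-1}, p_k\}$ in place of $\{p_{m+1}, p_m\}$. For showing $p_i < p_3$, I take $\ell$ the smallest index in $\{4, \ldots, k\}$ with $p_\ell > p_3$. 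When $\ell \geq 5$, minimality forces $p_{\ell-1} < p_3$, and the \xprop{} on $p_2 < p_{\ell-1} < p_3 < p_\ell$ again produces the forbidden $\{p_2, p_\ell\}$.

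The main obstacle is the edge case $\ell = 4$, that is, $p_3 < p_4$. Here $p_{\ell-1} = p_3$ itself, so the direct \xprop{} collapses. I would resolve this in three moves. First, iterating the \xprop{} on $(p_2, p_{j+1}, p_3, p_j)$ for $j = 4, 5, \ldots, k$ shows that placing $p_{j+1}$ in $(p_2, p_3)$ would force the forbidden $\{p_2, p_j\}$; hence $p_5, p_6, \ldots, p_k, p_1 > p_3$. Second, apply \Cref{thm:obstacle_lemma} to $(p_2, p_3, p_1)$ to extract a common neighbor $b$ of $p_2$ and $p_3$ with $p_3 < b < p_1$; note $b$ cannot be any cycle vertex, so in particular $b \neq p_4$. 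Third, split on the position of $b$ relative to $p_4$: if $b < p_4$, the \xprop{} on $(p_2, p_3, b, p_4)$ yields the forbidden $\{p_2, p_4\}$; if $b > p_4$, first observe $p_k > b$ (otherwise the \xprop{} on $(p_3, p_k, b, p_1)$ produces the forbidden $\{p_3, p_1\}$), then iterate the \xprop{} on $(p_2, p_{j-1}, b, p_j)$ downward from $j = k$ to conclude $p_{k-1}, \ldots, p_5 > b$, and finally apply the \xprop{} on $(p_2, p_4, b, p_5)$ to produce $\{p_2, p_5\}$. Throughout this $\ell = 4$ analysis the obstacle vertex $b$ plays the role of the predecessor of $p_3$ that is missing from the induced cycle, and the second iteration chain walks around the cycle using $b$ in place of $p_3$ to funnel a contradicting edge back to $p_2$.
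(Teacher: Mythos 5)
Your argument for the inequality $p_i < p_3$ is correct, including the delicate boundary case $\ell = 4$: there your obstacle vertex $b$ plays essentially the same role as the common neighbor that the paper extracts via its obstacle lemma, and each application of the \xprop{} checks out. The gap is in the first half, namely the case $m = k$ of your argument for $p_i < p_1$. You write that ``the same scheme works with $p_{k-1} < p_1$ in place of $p_{m+1}$,'' but nothing gives you $p_{k-1} < p_1$: maximality of $m$ only constrains indices larger than $m$, and for $m = k$ there are none, so $p_{k-1} > p_1$ is entirely possible at that point. The configuration your crossing scheme cannot reach is exactly the one in which $p_1$ is the \emph{second-leftmost} vertex of the cycle, i.e.\ all of $p_3, p_4, \dots, p_k$ lie to the right of $p_1$. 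Then the walk $p_3, p_4, \dots, p_k, p_1$ descends past $p_1$ only on its final edge $\{p_k, p_1\}$, which shares the endpoint $p_1$ with $\{p_2, p_1\}$ and therefore does not cross it, so no forbidden chord at $p_2$ is produced. This situation is precisely the mirror image of your hard case $\ell = 4$ (with $p_1$ in place of $p_3$) and needs the same amount of extra work; it cannot be dispatched by the one-line crossing argument.

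The hole is easy to patch. The hypothesis and conclusion of the lemma are invariant under traversing the cycle in the opposite direction, which keeps $p_2$ in position $2$ (still leftmost), exchanges the roles of $p_1$ and $p_3$, and permutes $p_4, \dots, p_k$ among themselves. Hence the inequality $p_i < p_1$ is literally the inequality $p_i < p_3$ for the reversed cycle, which you have already proved in full, including its $\ell = 4$ case. This reversal is also how the paper economizes: it assumes without loss of generality that $p_1 > p_3$ and then only has to push everything to the left of $p_3$. Alternatively, you could rerun your three-move $b$-vertex analysis verbatim with a common neighbor of $p_2$ and $p_1$ lying strictly between $p_1$ and $p_3$. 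As written, however, the case $p_k > p_1$ with $p_{k-1} > p_1$ is simply not covered.
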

\begin{proof}
Assume without loss of generality that $p_1 > p_3$.
Since $k \geq 4$, by \cref{thm:obstacle_lemma}, there is a vertex $q$ between $p_3$ and $p_1$ which is a common neighbor of $p_2$ and $p_3$.

By the \xprop{}, $p_4, \dots, p_{k-1}$ must all lie between $p_2$ and $p_1$.
Suppose that there exists $i > 3$ (chosen minimally) such that $p_{i} > q$.
Then, by the \xprop{}, $\{p_2, p_i\} \in E(G)$, which is a contradiction.
Since $p_3$ has an edge to $q$ but not to~$p_1$, $p_k$ must even lie to the left of $p_3$ by the \xprop{}.

Now, assume that there exists $j > 3$ (chosen maximally) such that $p_j > p_3$.
Then, $\{p_j, p_2\} \in E(G)$ by the \xprop{}.
Therefore, $p_4, \ldots, p_k$ must all be to the left of $p_3$.
\end{proof}

\Cref{thm:hole_lemma} now leads to a specific vertex ordering for an induced cycle.

\begin{proposition}\label{thm:Ck}
	Let $G$ be a persistent graph and let $p_1, \dots, p_k \in V(G)$ form an induced cycle (in this order) with $k \geq 4$.
	Then, $p_1 > p_3 > p_4 > \dots > p_k > p_2$ holds up to cyclic renaming and order reversal.
\end{proposition}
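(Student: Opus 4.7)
The plan is to reduce to the setup of Lemma~\ref{thm:hole_lemma} via cyclic renaming and orientation reversal: assume $p_2$ is the leftmost cycle vertex and $p_1 > p_3$, so that $p_2 < p_j < p_3 < p_1$ for every $j \in \{4, \ldots, k\}$ and $p_1$ is the rightmost vertex. It remains to establish the chain $p_3 > p_4 > \dots > p_k$; I will do so by downward induction on $j \in \{k, k-1, \ldots, 4\}$, showing at step $j$ that $p_j < p_i$ for every $i \in \{3, \ldots, j-1\}$.

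The core tool is a crossing-to-chord argument: if two cycle edges $\{p_a, p_{a+1}\}$ and $\{p_b, p_{b+1}\}$ properly cross as intervals on the $x$-axis, then the \xprop{} forces an edge between their leftmost and rightmost endpoints; because the cycle is induced, this forced edge must itself be a cycle edge, so such a crossing is forbidden whenever the two extreme endpoints are not cyclically adjacent.

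For the base case $j = k$, I apply this observation to $\{p_k, p_1\}$ against each $\{p_i, p_{i+1}\}$ with $i \in \{3, \ldots, k-2\}$. Since $p_1$ is rightmost, the only feasible crossing has $\min(p_i, p_{i+1}) < p_k < \max(p_i, p_{i+1}) < p_1$, and the forced chord $\{\min(p_i, p_{i+1}), p_1\}$ would join a vertex in $\{p_3, \ldots, p_{k-1}\}$ to $p_1$, which is not cyclically adjacent to it---a contradiction. Hence $p_k$ does not lie strictly between $p_i$ and $p_{i+1}$, and iterating this over $i = 3, 4, \ldots, k-2$, starting from $p_k < p_3$, yields $p_k < p_4, p_5, \ldots, p_{k-1}$. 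The inductive step $j+1 \to j$ (for $5 \leq j \leq k-1$) is formally identical, applied to $\{p_j, p_{j+1}\}$ against $\{p_i, p_{i+1}\}$ for $i \in \{3, \ldots, j-2\}$: by the inductive hypothesis we have $p_{j+1} < p_3, \ldots, p_j$, so the only feasible crossing forces the chord $\{p_{j+1}, \max(p_i, p_{i+1})\}$, which is non-cyclic by the index gap. The remaining case $j = 4$ is immediate from Lemma~\ref{thm:hole_lemma}.

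The main obstacle will be the bookkeeping of which cycle edges can legitimately cross $\{p_j, p_{j+1}\}$ without producing a contradiction: the edges $\{p_i, p_{i+1}\}$ with $i \in \{1, 2\}$ and with $i \geq j+2$ (including the wraparound edge $\{p_k, p_1\}$ in the inductive step) must be shown either to not cross $\{p_j, p_{j+1}\}$ at all or to produce only a chord that happens to coincide with a cycle edge such as $\{p_1, p_2\}$, so that only the informative range $i \in \{3, \ldots, j-2\}$ contributes constraints to the induction.
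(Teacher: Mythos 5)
Your argument is correct, but it follows a genuinely different route from the paper's. Both proofs start identically (reduce via \cref{thm:hole_lemma} to the situation where $p_2$ is leftmost, $p_1$ is rightmost, and $p_4,\dots,p_k$ lie in $(p_2,p_3)$), and both ultimately rest on the same engine---a crossing of two cycle edges forces, via the \xprop{}, a chord that an induced cycle cannot have. They diverge afterwards. The paper applies \cref{thm:hole_lemma} a \emph{second} time, mirrored and cyclically shifted to the sequence $p_k, p_1, p_2, \dots, p_{k-1}$, to conclude that $p_k$ is leftmost among $p_3,\dots,p_k$; it then runs a minimal-counterexample argument, choosing indices $\ell < i < j$ around a hypothetical ascent $p_i < p_{i+1}$ so that a single application of the \xprop{} produces one forbidden chord $\{p_\ell, p_j\}$. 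You instead establish the full chain $p_3 > p_4 > \dots > p_k$ by a nested downward induction, where each step rules out $p_j$ lying strictly between $p_i$ and $p_{i+1}$ by crossing $\{p_j,p_{j+1}\}$ (or $\{p_k,p_1\}$ in the base case) against $\{p_i,p_{i+1}\}$; your base case in particular re-derives ``$p_k$ is leftmost among $p_3,\dots,p_k$'' without the mirrored lemma. Your approach costs more bookkeeping (a double induction, and the orientation of each interval must be tracked), but each step is mechanical and the inner iteration ``$p_j < p_i$ and $p_j$ not between $p_i$ and $p_{i+1}$ implies $p_j < p_{i+1}$'' does go through in both orientation cases; the paper's version is shorter once \cref{thm:hole_lemma} is packaged for reuse, at the price of a less obvious choice of indices. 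One remark: the ``main obstacle'' you flag at the end---accounting for crossings with the edges $\{p_1,p_2\}$, $\{p_2,p_3\}$, and those with $i \geq j+2$---is not actually needed: your contradiction only requires that the \emph{specific} pairs you examine cannot cross, and the behavior of the remaining cycle edges is irrelevant to that deduction.
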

\begin{proof}
	We assume without loss of generality that $p_2$ is the leftmost vertex and that $p_1 > p_3$.
	By \cref{thm:hole_lemma}, it follows $p_4, \dots, p_k \subset [p_2, p_3]$.
	In particular, $p_1$ is the rightmost vertex.
	Thus, a mirrored version of \cref{thm:hole_lemma} applies to the shifted sequence $p_k, p_1, p_2, \dots, p_{k-1}$,
	giving $p_3, p_4, \dots, p_{k-1} \subset [p_k, p_1]$.
	Overall, we then have $p_4, p_5, \dots, p_{k-1} \subset [p_k, p_3]$.

        Now, assume towards a contradiction that there exists $3 < i < k-1$ with $p_i < p_{i+1}$, where $i$ is chosen minimally.
	Let $j$ be minimal such that $p_j < p_i$, that is, $p_{j-1} > p_i$. Note that $j > i$ by the choice of~$i$.
        Let $\ell < i$ be maximal with $p_\ell > p_{j-1}$, that is, $p_{\ell+1} < p_{j-1}$.
        Note also that~$p_{\ell+1}\ge p_i$ by the choice of~$i$.
	Then, $p_\ell > p_{j-1} > p_{\ell+1} > p_j$. Hence, by the \xprop{}, $G$ contains the edge $\{p_\ell, p_j\}$.
	Since $p_3, p_4, \dots, p_k$ form an induced path, this implies $j = \ell+1$, contradicting the fact that $j > i > \ell$.	
\end{proof}

Interestingly, we can use \Cref{thm:C4} to show that persistent graphs do not contain large antiholes (induced subgraphs isomorphic to the complement of a cycle).

\begin{theorem}\label{thm:antiholes}
\Pvg{s} do not contain antiholes of size at least~6.
\end{theorem}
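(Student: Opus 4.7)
The plan is to assume for contradiction that a persistent graph contains an induced antihole with vertices $p_1,\dots,p_k$, labelled in the cyclic order of the antihole (so that the non-edges are exactly $\{p_i,p_{i+1}\}$ with indices modulo $k$), for some $k\geq 6$, and to derive a contradiction via repeated application of \cref{thm:C4} to three induced $C_4$'s. The key structural fact is that whenever two indices $i,j$ have cyclic distance at least~$3$ in $\mathbb{Z}/k\mathbb{Z}$, the four vertices $\{p_i,p_{i+1},p_j,p_{j+1}\}$ induce a $C_4$ in the antihole in which $\{p_i,p_{i+1}\}$ and $\{p_j,p_{j+1}\}$ are the two non-edges and therefore sit at opposite positions of the $C_4$. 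Hence \cref{thm:C4} forces the two leftmost of these four vertices to be exactly one of these two ``opposite'' pairs.

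I first assume without loss of generality that $p_1$ is the leftmost among $p_1,\dots,p_k$. Applying the principle above to $\{p_1,p_2,p_j,p_{j+1}\}$ for each $j\in\{4,\dots,k-2\}$ (the cyclic-distance condition holds precisely in this range, which is non-empty because $k\geq 6$), the leftmost pair must contain $p_1$, so it is $\{p_1,p_2\}$, giving $p_2<p_j$ and $p_2<p_{j+1}$. Varying $j$ yields $p_2<p_i$ for all $i\in\{4,\dots,k-1\}$. Symmetrically, pairing the non-edge $\{p_k,p_1\}$ with $\{p_j,p_{j+1}\}$ for $j\in\{3,\dots,k-3\}$ gives $p_k<p_i$ for all $i\in\{3,\dots,k-2\}$.

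The contradiction then comes from one last induced $C_4$: on $\{p_2,p_3,p_{k-1},p_k\}$. For every $k\geq 6$ the cyclic distance between $2$ and $k-1$ is $3$, so this set does induce a $C_4$ whose two non-edges are $\{p_2,p_3\}$ and $\{p_{k-1},p_k\}$. By \cref{thm:C4} its two leftmost vertices form one of these two pairs. The first possibility gives $p_3<p_k$, contradicting $p_k<p_3$ from the second phase; the second possibility gives $p_{k-1}<p_2$, contradicting $p_2<p_{k-1}$ from the first phase.

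The only delicate part of the argument is the bookkeeping for which four-vertex subsets actually induce a $C_4$ in $\overline{C_k}$; this reduces to checking the cyclic-distance condition mentioned above, and the hypothesis $k\geq 6$ (tight at this value, where all relevant cyclic distances are exactly $3$) is used precisely to make the chosen sets valid. Once that combinatorial check is in place, the proof is simply three applications of \cref{thm:C4} chained together.
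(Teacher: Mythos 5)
Your proof is correct and takes essentially the same approach as the paper's: repeated application of \cref{thm:C4} to the induced $C_4$'s spanned by two non-edges of the antihole, anchored at the extremal vertex $p_1$. The only (cosmetic) difference is in the bookkeeping --- the paper chains three sweeps, each feeding its conclusion into the next, whereas you run two symmetric sweeps from the two non-edges incident to $p_1$ and finish with a single $C_4$ on $\{p_2,p_3,p_{k-1},p_k\}$; both versions are valid and of comparable length.
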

\begin{proof}
Let $p_1, \dots, p_k$ induce an antihole in $G$ in this order with $k \geq 6$.
Observe that for any $i, j$ with $\abs{i-j} \notin \{0, 1, 2\} \pmod{k}$, the vertices $p_i$, $p_j$, $p_{i+1}$, $p_{j+1}$ form an induced 4-cycle in this order.

Assume without loss of generality that $p_1$ is the rightmost vertex.
Then, for each $j =4,\ldots,k-2$, we apply \cref{thm:C4} to the $C_4$ on $p_1, p_j, p_2, p_{j+1}$ which yields that either $p_j$ and~$p_{j+1}$ or~$p_1$ and~$p_2$ are the two rightmost.
It follows by assumption that~$p_1$ and~$p_2$ are the rightmost. Hence,
$p_{5},\ldots,p_{k-1}$ are all to the left of~$p_{2}$.
Now, for $j =5,\ldots,k-1$, we apply \cref{thm:C4} to $p_2, p_j, p_3, p_{j+1}$ and obtain that $p_5, \dots, p_k$ are also to the left of $p_3$.
Finally, for $j = 6, \dots, k$ we use the same argument on $p_3, p_j, p_4, p_{j+1}$ (where~$p_{k+1}=p_1$) and obtain that~$p_6, \dots, p_k, p_1$ are to the left of $p_3$.
This contradicts our assumption that~$p_1$ is the rightmost vertex.
\end{proof}

Considering possible subclasses of persistent graphs,
we close this subsection by showing that not even connected unit interval graphs are necessarily persistent.
A demonstration of this fact is the unit interval graph depicted in \Cref{fig:unit_example}.
It is not persistent because it only has one Hamiltonian path (up to isomorphism)
and the ordering given by this path violates the \xprop{}.
It is open whether every unit interval graph can appear as an induced subgraph of a persistent graph.

\begin{figure}
  \centering
\begin{tikzpicture}
  \tikzset{every node/.style={vertex,inner sep=2pt}}
  \node (1) at (0,0) {};
  \node (2) at (1,0) {};
  \node (3) at (2,-1) {};
  \node (4) at (2,1) {};
  \node (5) at (3,0) {};
  \node (6) at (4,0) {}; 
  \draw (1) -- (2) -- (3) -- (5);
  \draw (2) -- (4) -- (3);
  \draw (4) -- (5) -- (6);
\end{tikzpicture}
\hspace{2em}
\begin{tikzpicture}
  \draw[thick] (-0.25,0) -- (0.75,0);
  \draw[thick] (0.5,0.5) -- (1.5,0.5);
  \draw[thick] (1.25,1.0) -- (2.25,1.0);
  \draw[thick] (1.25,0) -- (2.25,0);
  \draw[thick] (2,0.5) -- (3,0.5);
  \draw[thick] (2.75,0) -- (3.75,0);
\end{tikzpicture}
\caption{A unit interval graph (with interval representation on the right) which is not a terrain visibility graph.}
\label{fig:unit_example}
\end{figure}
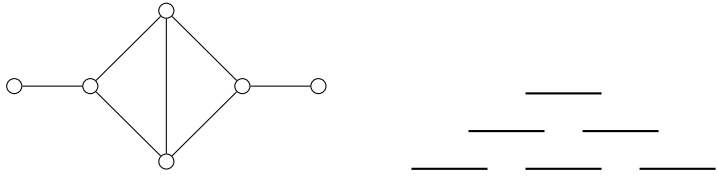

\subsection{Degree Sequences}

\textcite{Luque2017} studied the degree sequences of horizontal visibility graphs
in order to explain why measures based on the degree sequence of horizontal visibility graphs of time series perform well in classification tasks. Their conclusion is that the degree sequence essentially contains all information of the underlying time series. Formally, they show that (canonical) horizontal visibility graphs are uniquely determined by their degree sequence.

In contrast, this not the case for terrain visibility graphs since
the two terrain visibility graphs in \Cref{fig:G1,fig:G2} both have the ordered degree sequence $(7,4,3,4,5,7,4,4,4,6,4)$ and are not isomorphic (since the unique degree-3 vertex has a degree-7 neighbor in one graph but not in the other).

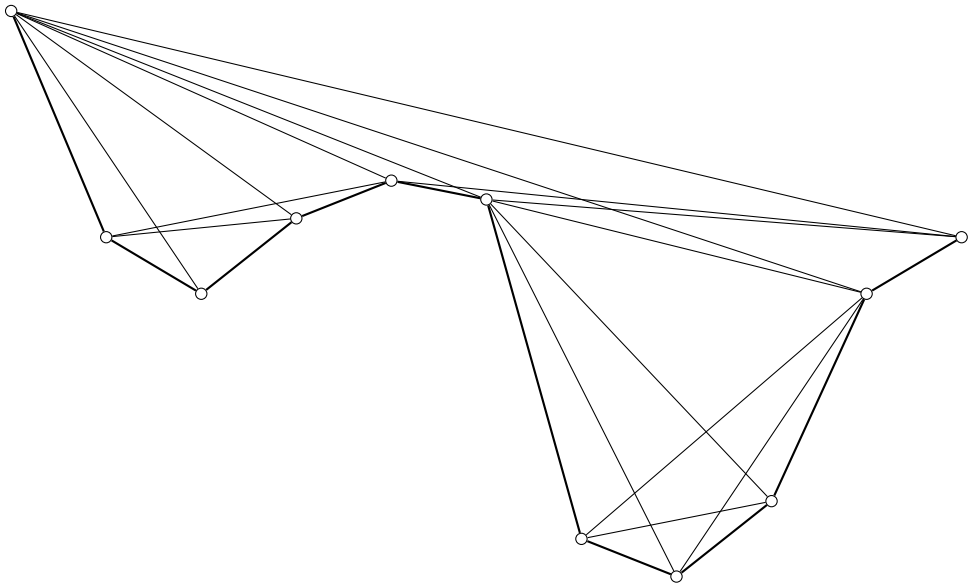
\begin{figure}
  \centering
  \begin{tikzpicture}[yscale=0.25,xscale=1.25]
    \tikzset{every node/.style={vertex}}
    \node (0) at (0,30) {};
    \node (1) at (1,18) {};
    \node (2) at (2,15) {};
    \node (3) at (3,19) {};
    \node (4) at (4,21) {};
    \node (5) at (5,20) {};
    \node (6) at (6,2) {};
    \node (7) at (7,0) {};
    \node (8) at (8,4) {};
    \node (9) at (9,15) {};
    \node (10) at (10,18) {};
    \draw[thick] (0) -- (1) -- (2) -- (3)
    -- (4) -- (5) -- (6) -- (7) -- (8) -- (9) -- (10);
    \draw (0) -- (2);
    \draw (0) -- (3);
    \draw (0) -- (4);
    \draw (0) -- (5);
    \draw (0) -- (9);
    \draw (0) -- (10);
    \draw (1) -- (3);
    \draw (1) -- (4);
    \draw (4) -- (10);
    \draw (5) -- (7);
    \draw (5) -- (8);
    \draw (5) -- (9);
    \draw (5) -- (10);
    \draw (6) -- (8);
    \draw (6) -- (9);
    \draw (7) -- (9);
  \end{tikzpicture}
  \caption{A terrain visibility graph~$G_1$ with ordered degree sequence $(7,4,3,4,5,7,4,4,4,6,4)$.
  Vertices have unit-spaced $x$-coordinates. The corresponding $y$-coordinates are $30,18,15,19,21,20,2,0,4,15,18$. Note that in the drawing the~$y$-axis is scaled down.}
  \label{fig:G1}
\end{figure}

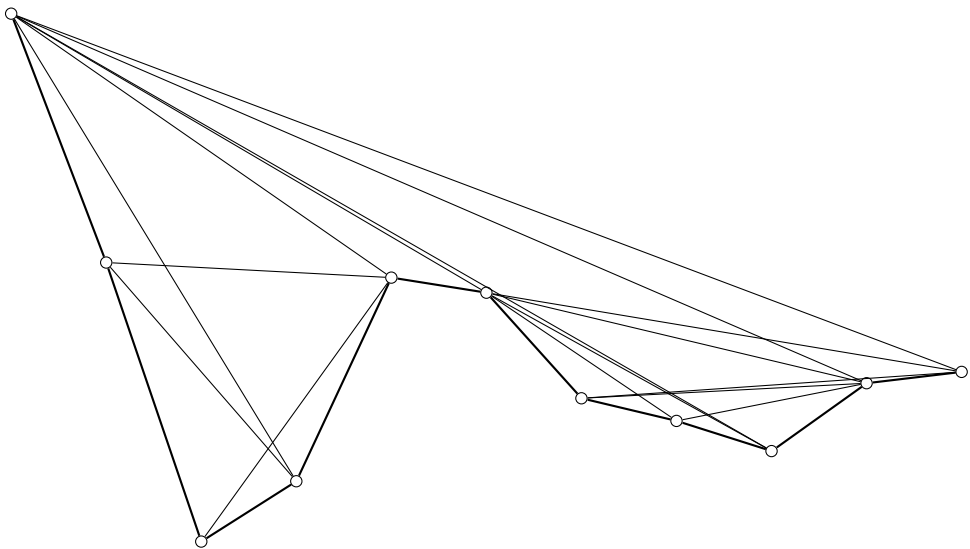
\begin{figure}
  \centering
  \begin{tikzpicture}[yscale=0.05,xscale=1.25]
    \tikzset{every node/.style={vertex}}
    \node (0) at (0,140) {};
    \node (1) at (1,74) {};
    \node (2) at (2,0) {};
    \node (3) at (3,16) {};
    \node (4) at (4,70) {};
    \node (5) at (5,66) {};
    \node (6) at (6,38) {};
    \node (7) at (7,32) {};
    \node (8) at (8,24) {};
    \node (9) at (9,42) {};
    \node (10) at (10,45) {};
    \draw[thick] (0) -- (1) -- (2) -- (3)
    -- (4) -- (5) -- (6) -- (7) -- (8) -- (9) -- (10);
    \draw (0) -- (3);
    \draw (0) -- (4);
    \draw (0) -- (5);
    \draw (0) -- (8);
    \draw (0) -- (9);
    \draw (0) -- (10);
    \draw (1) -- (3);
    \draw (1) -- (4);
    \draw (2) -- (4);
    \draw (5) -- (7);
    \draw (5) -- (8);
    \draw (5) -- (9);
    \draw (5) -- (10);
    \draw (6) -- (9);
    \draw (6) -- (10);
    \draw (7) -- (9);
  \end{tikzpicture}
  \caption{A terrain visibility graph~$G_2$ with ordered degree sequence $(7,4,3,4,5,7,4,4,4,6,4)$.
    Vertices have unit-spaced $x$-coordinates. The corresponding $y$-coordinates are $140,74,0,16,70,66,38,32,24,42,45$. Note that in the drawing the~$y$-axis is scaled down.}
  \label{fig:G2}
\end{figure}

\section{Shortest Paths}\label{sec:shortestpath}

\newcommand{\lhorizon}[1]{\operatorname{lhorizon}(#1)}
\newcommand{\rhorizon}[1]{\operatorname{rhorizon}(#1)}

\newcommand{\lclosest}[2]{\operatorname{lclosest}_{#1}(#2)}
\newcommand{\rclosest}[2]{\operatorname{rclosest}_{#1}(#2)}

\newcommand{\overiota}[2]{%
\hbox{\scalebox{0.8}{$#1\iota$}}
\raisebox{\raiseamount{#1}}{\scalebox{0.5}{$#1\mkern-11mu#2$}}
}

\def\raiseamount#1{%
  \ifx#1\displaystyle
    .9ex
  \else
    \ifx#1\textstyle
      .9ex
    \else
      \ifx#1\scriptstyle
        .6ex
      \else
        .45ex
      \fi
    \fi
  \fi
}

A natural example for the occurrence of \pvg{s} is a network of stations communicating via line-of-sight links, e.g.\ radio signals.
A common task is to determine the shortest path between two vertices $s < t$.
If the length of a path is measured via Euclidean distance, then the easy solution is to always go to the right as far as possible without going beyond~$t$.
In general, computing Euclidean shortest paths in polygon visibility graphs is a well-studied problem and linear-time algorithms are known~\cite{GHLST87}.

But a more realistic distance measure in the above scenario is the number of edges, as edge travel times might be negligible in comparison to the processing times at the vertices.
In this setting, the situation becomes more difficult since
it might now be better to move in the opposite direction first.
Nevertheless, the ``go as far as possible'' principle still proves very useful here.
To the best of our knowledge, no specific algorithm for unweighted shortest path computation in terrain (or polygon) visibility graphs has been developed so far.

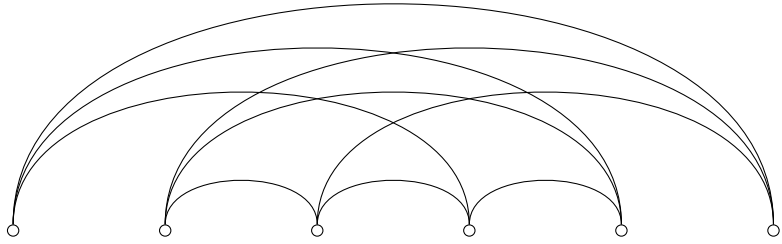
\begin{figure}
  \centering
  \begin{tikzpicture}[scale=2.0]
    \tikzset{every node/.style={vertex}}
    \node (1) at (0,0) {};
    \node (6) at (5,0) {};
    \node (2) at (1,0) {};
    \node (3) at (2,0) {};
    \node (4) at (3,0) {};
    \node (5) at (4,0) {};
    \draw[curveedge] (1) to (6);
    \draw[curveedge] (1) to (4);
    \draw[curveedge] (1) to (5);
    \draw[curveedge] (6) to (2);
    \draw[curveedge] (6) to (3);
    \draw[curveedge] (2) to (3);
    \draw[curveedge] (4) to (5);
    \draw[curveedge] (3) to (4);
    \draw[curveedge] (2) to (5);
  \end{tikzpicture}
  \caption{An ordering of the vertices of an antihole of size 6 that satisfies the \xprop{}.}
  \label{fig:antihole}
\end{figure}

The algorithm we describe in this section does not only work for \pvg{s} but
in fact for every graph with a vertex ordering fulfilling the \xprop{} (so-called \emph{terrain-like} graphs~\cite{AFKS19}).
Note that every induced subgraph of a \pvg{} is terrain-like.
Interestingly, the converse is not true since the complement of~$C_6$ (that is, a size-6 antihole)
can satisfy the \xprop{} (see \Cref{fig:antihole}) but cannot be an induced subgraph of a \pvg{} (\Cref{thm:antiholes}).
Hence, our algorithm can be used in a more general context.
For example, in the communication scenario above, we can also handle vertices which obstruct communication but are not stations themselves.

In this section, we assume $G$ to have a (known) vertex ordering satisfying the \xprop{}.
Furthermore, we assume $s$ and $t$ to be two vertices of $G$ with $s < t$ and $\dist(s, t) =d^* < \infty$,
where $\dist(s, t)$ denotes the length (that is, the number of edges) of a shortest path from $s$ to $t$.

We start with the crucial observation that a shortest path contains at most one pair of crossing edges.

\begin{lemma}\label{thm:shortest_path_crossing}
	If a shortest $s$-$t$-path $P$ contains a pair of crossing edges $\{v, v'\}, \{w, w'\}$ with $v < w < v' < w'$, 
	then it also contains the edge $\{v, w'\}$.
	Furthermore, this is the only pair of crossing edges in~$P$ and $V(P) \subseteq [v, w']$.
      \end{lemma}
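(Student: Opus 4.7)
The plan is to apply the \xprop{} to produce the edge $\{v, w'\}$ and then to exploit the minimality of $P$ via a ``shortcut'' case analysis to pin down a very rigid local configuration of the crossing edges inside $P$; the remaining assertions then follow. Concretely, the \xprop{} applied to $v < w < v' < w'$ with $\{v, v'\}, \{w, w'\} \in E(G)$ directly yields $\{v, w'\} \in E(G)$. To show $\{v, w'\} \in E(P)$, I enumerate how the two crossing edges may appear along $P$, parameterized by which of them comes first on $P$ and by the direction (in $x$-coordinate) in which each is traversed. Up to reversing the path there are essentially four such configurations; in all but one, a strictly shorter $s$-$t$-walk can be built by splicing in $\{v, w'\}$ (possibly together with $\{v, v'\}$ or $\{w, w'\}$), contradicting the minimality of $P$. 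The surviving configuration forces the four vertices $v, v', w, w'$ to appear consecutively along $P$ in the order $v', v, w', w$ (or its reverse), with $\{v, w'\}$ as the middle edge.

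To prove $V(P) \subseteq [v, w']$, I argue by contradiction. Suppose some vertex of $P$ lies outside $[v, w']$; walking along $P$ toward the just-established consecutive block, we find an edge of $P$ with one endpoint outside $[v, w']$ and the adjacent endpoint inside. This edge, together with one of $\{v, v'\}, \{v, w'\}, \{w, w'\}$, forms a second pair of crossing edges in $P$. Applying the case analysis from the first part to this new pair either yields a direct shortcut or forces a rigid four-vertex consecutive block in $P$ whose positions conflict with those of $v', v, w', w$, contradicting minimality in either case.

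For uniqueness of the crossing pair, suppose $P$ contains a second crossing pair $\{a, c\}, \{b, d\}$ with $a < b < c < d$; by the preceding step all four vertices lie in $[v, w']$. Applying the first part to this new pair produces a second four-vertex consecutive block $c, a, d, b$ in $P$ with middle edge $\{a, d\}$. A short case analysis on the positions of $a, b, c, d$ relative to $v, v', w, w'$ shows that some edge of the new block must cross one of $\{v, v'\}, \{v, w'\}, \{w, w'\}$ in a way incompatible with the rigid local structure already imposed on $P$, giving the final contradiction. The main obstacle throughout is organizing these several small case analyses, but each is dispatched by the same mechanism: the middle edge guaranteed by the \xprop{} serves as a shortcut whenever $P$ fails to follow the unique local configuration permitted by minimality.
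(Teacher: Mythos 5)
Your first two steps are sound and essentially reproduce the paper's reasoning in unrolled form. The paper obtains ``$\{v,w'\}$ is an edge of $P$'' in one line from the fact that a shortest path is an induced path (any chord of $P$ yields a shortcut); your four-configuration splice analysis re-derives exactly this, and the degree-$2$ constraints at $v$ and $w'$ then give your consecutive block $v',v,w',w$. Your containment argument for $V(P)\subseteq[v,w']$ also works: the transition edge $\{u,x\}$ you find satisfies $x\notin\{v,w'\}$ (those vertices already carry two path edges), hence $v<x<w'$ and $\{u,x\}$ crosses $\{v,w'\}$; the \xprop{} then forces a chord incident to $v$ or $w'$ to lie on $P$, which is impossible.

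The gap is in the uniqueness step. You claim that for a second crossing pair $\{a,c\},\{b,d\}$ with $a<b<c<d$ inside $[v,w']$, some edge of the new block $c,a,d,b$ must cross one of $\{v,v'\}$, $\{v,w'\}$, $\{w,w'\}$. That is false in general: if all four of $a,b,c,d$ lie strictly inside one of the gaps $(v,w)$, $(w,v')$ or $(v',w')$, then for each of the three intervals $(v,v')$, $(w,w')$, $(v,w')$ either both or neither endpoint of each new-block edge lies strictly inside it, so no crossing with the three distinguished edges occurs and your case analysis produces no contradiction. The correct move --- and the one the paper makes when it says ``the same argument would apply to that pair'' --- is to apply your containment step to the \emph{second} pair as well, yielding $V(P)\subseteq[a,d]$. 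Combined with $V(P)\subseteq[v,w']$ and $a,d,v,w'\in V(P)$ this forces $a=v$ and $d=w'$; since $v$ and $w'$ each already have their two path edges fixed, $c=v'$ and $b=w$, so the second pair coincides with the first. Without that move your uniqueness argument does not close; with it, the positional case analysis you propose becomes unnecessary.
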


      \begin{figure}
  \centering
  \begin{tikzpicture}
    \node[vertex,label=below:$\strut{}x$] (x) at (0,0) {};
    \node[vertex,label=below:$\strut{}v$] (v) at (1,0) {};
    \node[vertex,label=below:$\strut{}x'$] (x') at (2,0) {};
    \node[vertex,label=below:$\strut{}w$] (w) at (3,0) {};
    \node[vertex,label=below:$\strut{}v'$] (v') at (4,0) {};
    \node[vertex,label=below:$\strut{}w'$] (w') at (5,0) {};
    \draw[curveedge,dashed] (x) to (x');
    \draw[curveedge] (v) to (v');
    \draw[curveedge] (w') to (w);
    \draw[curveedge] (v) to (w');
    \draw[curveedge,dashed] (x) to (w');
  \end{tikzpicture}\hspace{2cm}
  \begin{tikzpicture}
    \node[vertex,label=below:$\strut{}v$] (v) at (0,0) {};
    \node[vertex,label=below:$\strut{}w$] (w) at (1,0) {};
    \node[vertex,label=below:$\strut{}x$] (x) at (2,0) {};
    \node[vertex,label=below:$\strut{}v'$] (v') at (3,0) {};
    \node[vertex,label=below:$\strut{}x'$] (x') at (4,0) {};
    \node[vertex,label=below:$\strut{}w'$] (w') at (5,0) {};
    \draw[curveedge,dashed] (x) to (x');
    \draw[curveedge] (v) to (v');
    \draw[curveedge] (w') to (w);
    \draw[curveedge] (v) to (w');
    \draw[curveedge,dashed] (v) to (x');
  \end{tikzpicture}
  \caption{Two sketches of a shortest path containing the edges~$\{v,v'\}$, $\{v,w'\}$ and $\{w,w'\}$.
  The dashed edge~$\{x,x'\}$ cannot also be on the path since this would imply the existence of the other dashed edge.}
\label{fig:crossing}
\end{figure}
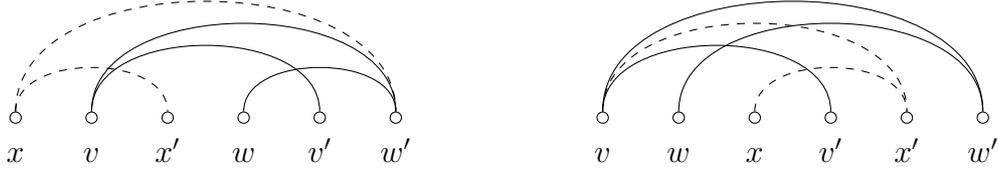
      
\begin{proof}
	Let $P$ be a shortest $s$-$t$-path and $\{v, v'\}, \{w, w'\}$ two crossing edges with $v < w < v' < w'$.
	By the \xprop{}, $G$ contains the edge $\{v, w'\}$. Since~$P$ is a shortest path and thus also an induced path, it must contain that edge (see \Cref{fig:crossing}).
	We claim that no other edge $\{x, x'\}$, $x < x'$, of~$P$ can cross $\{v, v'\}$.
	If $x < v < x' < v'$, then~$\{x,x'\}$ would also cross $\{v, w'\}$.
        Hence, by the \xprop{}, $P$ would have to contain the edge~$\{x,w'\}$, which is not possible since~$w'$ would be incident to three edges of~$P$.
	Otherwise, if $v < x < v'< x'$, then $P$ would contain the edge $\{v, x'\}$, which is again not possible.
	By symmetry, no other edge of $P$ can cross $\{w, w'\}$.
        
	Consequently, $P$ contains only vertices in~$[v, w']$ since otherwise~$P$ would contain an edge from a vertex~$u\not\in[v,w']$ to a vertex~$x$ with~$v \leq x \leq w'$.
	Since $v, w'$ are both already contained in two edges of $P$, this would imply $v < x < w'$ and thus the edge~$\{u,x\}$ would cross~$\{v,v'\}$ or~$\{w,w'\}$.
	
	It follows that $P$ cannot contain a second pair of crossing edges since the same argument would apply to that pair.
\end{proof}

We denote a shortest $s$-$t$-path $P$ of length~$d^*$ by its vertices $s=p_0, p_1, \ldots, p_{d^*}=t$
and define $\lefti(P)$ to be the index of the leftmost vertex of $P$.
Analogously, $\righti(P)$ is the index of the rightmost vertex of $P$.
The following can be obtained from \Cref{thm:shortest_path_crossing} (see also \cref{fig:path_cases} (iii)).

\begin{lemma}\label{thm:right_left}
	Let $P$ be a shortest $s$-$t$-path.
	Then, $\righti(P) < \lefti(P)$ holds if and only if $P$ contains a pair of crossing edges.
        
        Moreover, if~$P$ contains a pair of crossing edges, then $\righti(P) = \lefti(P)-1$ and
        $p_i < p_j$ holds for all $i < \righti(P) < \lefti(P) < j$.
\end{lemma}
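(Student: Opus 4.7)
The approach is to apply \Cref{thm:shortest_path_crossing} to extract the structure of~$P$ around its crossing edges and then to use the hypothesis $s<t$ to pin down the order in which the leftmost and rightmost vertices appear on~$P$.

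Suppose first that $P$ contains a pair of crossing edges. By \Cref{thm:shortest_path_crossing}, these are uniquely $\{v,v'\}$ and $\{w,w'\}$ with $v<w<v'<w'$, the edge $\{v,w'\}$ lies on~$P$, and $V(P)\subseteq[v,w']$. It follows that $v=p_{\lefti(P)}$ and $w'=p_{\righti(P)}$ are the leftmost and rightmost vertices of~$P$, and since $\{v,w'\}$ is a path edge they are adjacent on~$P$, so $|\lefti(P)-\righti(P)|=1$. Because $v$ is incident on~$P$ to both $\{v,v'\}$ and $\{v,w'\}$ (and $w'$ to both $\{v,w'\}$ and $\{w,w'\}$), these are already all of their path-neighbors, and so the path traverses either the segment $v',v,w',w$ (Case~A: $\righti=\lefti+1$) or the segment $w,w',v,v'$ (Case~B: $\righti=\lefti-1$).

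The main step is to rule out Case~A using $s<t$, by exhibiting a \emph{second} pair of crossing edges in~$P$ and contradicting the uniqueness clause of \Cref{thm:shortest_path_crossing}. Consider the path edges $\{w,p_{\lefti+3}\}$ (the edge of $P$ leaving~$w$) and $\{p_{\lefti-2},v'\}$ (the edge of $P$ entering~$v'$), whenever they exist. If $p_{\lefti+3}>v'$, then $\{v,v'\}$ and $\{w,p_{\lefti+3}\}$ cross since $v<w<v'<p_{\lefti+3}$; symmetrically, if $p_{\lefti-2}<w$ then $\{w,w'\}$ and $\{p_{\lefti-2},v'\}$ cross. Otherwise $p_{\lefti-2}>w$ and $p_{\lefti+3}<v'$; if moreover both lie in $(w,v')$ with $p_{\lefti-2}<p_{\lefti+3}$, then the edges $\{p_{\lefti-2},v'\}$ and $\{w,p_{\lefti+3}\}$ cross in the order $w<p_{\lefti-2}<p_{\lefti+3}<v'$. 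The remaining configurations (together with the degenerate cases $\lefti\le 1$ or $d^*\le \lefti+2$, in which one of the two auxiliary vertices does not exist) form the main obstacle: I would handle them by iterating the same kind of argument with edges further inside the prefix and suffix of~$P$, invoking $s<t$ to guarantee that an extra crossing pair is eventually exposed. This then forces Case~B, yielding $\righti(P)=\lefti(P)-1$.

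The ``moreover'' assertion $p_i<p_j$ for $i<\righti(P)<\lefti(P)<j$ is treated along the same lines: a violation $p_i>p_j$ would let me combine suitable edges incident to $p_i$ and $p_j$ with the known crossing segment to produce a second crossing pair in~$P$, again contradicting uniqueness. Conversely, if $\righti(P)<\lefti(P)$, the edges $\{p_{\righti(P)},p_{\righti(P)+1}\}$ and $\{p_{\lefti(P)-1},p_{\lefti(P)}\}$ leave the rightmost and enter the leftmost vertex, respectively, in directions forced by extremality; a short direct inspection of these (and of adjacent edges when they coincide) under the constraint $s<t$ exposes a crossing pair, closing the biconditional.
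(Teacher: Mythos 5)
Your setup follows the paper's route: reduce everything to \Cref{thm:shortest_path_crossing}, observe that $v=p_{\lefti(P)}$ and $w'=p_{\righti(P)}$ are joined by a path edge, and then rule out the traversal order $v',v,w',w$ (your Case~A). But the central step---actually ruling out Case~A---is not carried out. The three configurations you do handle never invoke the hypothesis $s<t$, and that hypothesis is indispensable here: reversing a Case~B path yields a Case~A path for the shortest path from $t$ to $s$, so no argument that ignores which endpoint is the left one can succeed. The configuration you leave open (both $p_{\lefti-2}$ and $p_{\lefti+3}$ in $(w,v')$ with $p_{\lefti+3}<p_{\lefti-2}$, where your two candidate edges are nested rather than crossing) is exactly the case in which a local inspection of the edges adjacent to the crossing segment decides nothing, and ``iterating the same kind of argument further inside the prefix and suffix'' is not shown to terminate or to ever expose a crossing. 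The same vagueness affects your converse direction and your treatment of the ``moreover'' clause.

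The missing ingredient is global: two vertex-disjoint subpaths of $P$ whose endpoint pairs interleave on the line must contain a crossing pair of edges (the fact the paper uses in \Cref{thm:left_right}). In Case~A, uniqueness of the crossing pair from \Cref{thm:shortest_path_crossing} first forces every vertex of the prefix $p_0,\dots,p_{\lefti-1}=v'$ to lie to the right of $w$: otherwise some prefix edge $\{x,x'\}$ would satisfy $x<w<x'<w'$ and hence cross $\{w,w'\}$, yet it cannot equal $\{v,v'\}$ because $v=p_{\lefti}$ is not on the prefix. Symmetrically, every vertex of the suffix $w=p_{\lefti+2},\dots,p_{d^*}$ lies to the left of $v'$. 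Hence $w<s$ and $t<v'$, and only now does $s<t$ enter: it gives $w<s<t<v'$, so the prefix (from $s$ to $v'$) and the suffix (from $w$ to $t$) are vertex-disjoint with interleaved endpoints and therefore contain a second crossing pair, a contradiction (the degenerate cases $s=v'$ or $t=w$ contradict $s<t$ outright). The same tool closes your other two gaps: the subpaths from $s$ to $p_{\righti(P)}$ and from $p_{\lefti(P)}$ to $t$ have endpoints ordered $p_{\lefti(P)}<s<t<p_{\righti(P)}$, giving the forward implication, and for the ``moreover'' part one first shows (again via uniqueness) that all of $p_0,\dots,p_{\righti(P)-1}$ lie left of $v'$ and all of $p_{\lefti(P)+1},\dots,p_{d^*}$ lie right of $w$, so that a violation $p_j<p_i$ makes the subpaths from $p_i$ to $w$ and from $v'$ to $p_j$ interleave.
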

\begin{proof}
  If $\righti(P) < \lefti(P)$, then $P$ must contain a pair of crossing edges between the subpaths from $s$ to $p_{\righti(P)}$ and from $p_{\lefti(P)}$ to $t$.
  Conversely, if $P$ contains a pair of crossing edges, then, by \cref{thm:shortest_path_crossing}, the two crossing edges must be $\{p_{\righti(P)-1}, p_{\righti(P)}\}$ and $\{p_{\lefti(P)}, p_{\lefti(P)+1}\}$.
        This implies that~$P$ contains the edge~$\{p_{\righti(P)},p_{\lefti(P)}\}$, which implies $\righti(P) = \lefti(P)-1$.
	Then, for all $i < \righti(P) < \lefti(P) < j$, it holds that $p_i < p_j$ since $P$ has no other crossing edges and $p_{\righti(P)-1} < p_{\lefti(P) + 1}$.
\end{proof}

Clearly, \Cref{thm:right_left} implies that $\lefti(P) < \righti(P)$ if and only if $P$ contains no crossing edges.
Moreover, the following holds (see \cref{fig:path_cases} (i) and (ii)).

\begin{lemma}\label{thm:left_right}
	Let $P$ be a shortest $s$-$t$-path with $\lefti(P) < \righti(P)$.
	Then, $p_i < p_j$ holds for all $i < \lefti(P) < j$ and all $i < \righti(P) < j$.
\end{lemma}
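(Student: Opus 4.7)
The plan is to combine the previous lemma with an extremal argument. First, \cref{thm:right_left} tells me that $P$ contains no pair of crossing edges, which is the only ingredient beyond the $X$-property that I will use. Write $\ell = \lefti(P)$ and $r = \righti(P)$. The statement for $r$ reduces to the statement for $\ell$ by a symmetry: reversing the order along $P$ and flipping the left-right orientation of the vertex ordering gives a path $\tilde P$ in which the roles of $\lefti$ and $\righti$ are swapped while the $X$-property and ``shortest'' are preserved, so the ``$i < \lefti < j$'' case applied to $\tilde P$ is exactly the ``$i < \righti < j$'' case for $P$ after re-indexing. I therefore only need to prove $p_i < p_j$ for $i < \ell < j$.

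The key preliminary is the sub-claim that $p_i < p_{\ell+1}$ for every $i < \ell$. Assume otherwise and let $i^*$ be the largest index below $\ell$ with $p_{i^*} > p_{\ell+1}$. In the bulk case $i^* + 1 < \ell$, maximality of $i^*$ gives $p_{i^*+1} < p_{\ell+1}$, and combined with $p_{i^*+1} > p_\ell$ this yields $p_\ell < p_{i^*+1} < p_{\ell+1} < p_{i^*}$, so the non-adjacent edges $\{p_{i^*}, p_{i^*+1}\}$ and $\{p_\ell, p_{\ell+1}\}$ cross, contradicting the no-crossing property. In the boundary case $i^* + 1 = \ell$, those two edges share~$p_\ell$, but now $p_r \geq p_{\ell-1} > p_{\ell+1}$ forces $r \geq \ell + 2$, so the smallest $k > \ell + 1$ with $p_k > p_{\ell-1}$ exists and satisfies $p_\ell < p_{k-1} < p_{\ell-1} < p_k$. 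The non-adjacent edges $\{p_{\ell-1}, p_\ell\}$ and $\{p_{k-1}, p_k\}$ then cross, again a contradiction.

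Armed with the sub-claim, I conclude by a minimal-counterexample argument. Suppose for contradiction that $p_i > p_j$ for some $i < \ell < j$, and choose such $(i, j)$ with $j$ minimal and, subject to that, $i$ maximal. The sub-claim rules out $j = \ell + 1$, so $j \geq \ell + 2$. Minimality of $j$ means the lemma already holds for the smaller index $j' = j - 1 > \ell$, giving $p_{j-1} > p_i$. Maximality of~$i$ (combined with $p_{i+1} = p_\ell < p_j$ in the boundary case $i + 1 = \ell$) gives $p_{i+1} < p_j$. Thus $p_{i+1} < p_j < p_i < p_{j-1}$, so the non-adjacent edges $\{p_i, p_{i+1}\}$ and $\{p_{j-1}, p_j\}$ cross, contradicting the no-crossing property.

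The main obstacle is the sub-claim, specifically the boundary subcase $i^* = \ell - 1$: the two natural edges involved are adjacent and cannot themselves cross, so one has to appeal to the rightmost vertex of $P$ to manufacture a second, non-adjacent edge that produces a crossing with $\{p_{\ell-1}, p_\ell\}$. Once the sub-claim is in place, the minimal-counterexample step for Claim~1 and the symmetry reduction for Claim~2 go through cleanly.
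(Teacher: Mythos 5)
Your proof is correct, but it takes a noticeably different route from the paper's. Both arguments rest on the same pivot, namely that $\lefti(P) < \righti(P)$ together with \cref{thm:right_left} forces $P$ to be crossing-free; the difference is in how a crossing is extracted from a hypothetical counterexample. The paper argues globally: assuming $p_j < p_i$, the two subpaths of $P$ joining $p_i$ to $p_{\lefti(P)}$ and $p_j$ to $p_{\righti(P)}$ are vertex-disjoint with interleaved endpoints, hence must contain a crossing pair of edges --- a two-line topological argument that handles the $\lefti$ and $\righti$ cases simultaneously. You instead run a double-extremal induction ($j$ minimal, then $i$ maximal) that exhibits the crossing pair $\{p_i,p_{i+1}\}$, $\{p_{j-1},p_j\}$ explicitly, with a separate sub-claim for the base case $j=\ell+1$ (whose boundary subcase $i^*=\ell-1$ you correctly resolve by recruiting the rightmost vertex to manufacture a non-adjacent crossing partner for $\{p_{\ell-1},p_\ell\}$), and you dispatch the $\righti$ case by an order-reversal symmetry. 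Your version is longer but fully elementary: it effectively supplies the proof of the ``two disjoint interleaved subpaths must cross'' step that the paper asserts without justification. The paper's version buys brevity and a uniform treatment of both cases at the price of leaving that discrete intermediate-value fact implicit.
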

\begin{proof}
We prove both cases simultaneously.
Assume towards a contradiction that $p_j < p_i$,
then we have~$p_{\lefti(P)} < p_j < p_i < p_{\righti(P)}$.
Consider the subpath $P_1$ of $P$ that connects $p_i$ with $p_{\lefti(P)}$ and the subpath $P_2$ of $P$ that connects $p_j$ with $p_{\righti(P)}$.
These two subpaths are vertex-disjoint and must thus be crossing, meaning that there is a pair of crossing edges $e_1, e_2$ where $e_1$ is an edge of $P_1$ and $e_2$ an edge of $P_2$.
By \cref{thm:right_left}, this implies $\righti(P) < \lefti(P)$, which is a contradiction.
\end{proof}

In combination, \cref{thm:right_left} and \cref{thm:left_right} imply the following corollary.
\begin{corollary}\label{cor:vertex-st-order}
  For a shortest $s$-$t$-path~$P$, it holds that all vertices $p_i$ with $i < \righti(P)$ satisfy $p_i < t$ and all vertices $p_i$ with $i > \lefti(P)$ satisfy $p_i > s$.
\end{corollary}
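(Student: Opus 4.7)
The plan is to case-split on whether $P$ contains a pair of crossing edges, since Lemma~\ref{thm:right_left} and Lemma~\ref{thm:left_right} together give a complete picture of the vertex ordering in each case. In both cases, each of the two inequalities of the corollary will follow from a single application of the appropriate lemma, modulo a short boundary check at the indices $0$ and $d^*$.

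Suppose first that $P$ has no crossing edges, so $\lefti(P) < \righti(P)$ by Lemma~\ref{thm:right_left}. For $i < \righti(P)$: if $\righti(P) < d^*$, then Lemma~\ref{thm:left_right} applied to the triple $i < \righti(P) < d^*$ yields $p_i < p_{d^*} = t$; and if $\righti(P) = d^*$, then $t = p_{\righti(P)}$ is itself the rightmost vertex of $P$, so $p_i < t$ for any other index $i$ by the defining property of $\righti(P)$. A mirrored argument handles $i > \lefti(P)$, using Lemma~\ref{thm:left_right} on $0 < \lefti(P) < i$ when $\lefti(P) > 0$, and the definition of $\lefti(P)$ when $\lefti(P) = 0$.

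Now suppose $P$ contains a pair of crossing edges. By Lemma~\ref{thm:right_left}, $\righti(P) < \lefti(P)$ and $p_i < p_j$ for every $i < \righti(P) < \lefti(P) < j$. Since $s < t$ and both endpoints lie on $P$, the vertex $s = p_0$ cannot be the rightmost of $P$ and $t = p_{d^*}$ cannot be its leftmost; hence $\righti(P) \geq 1$ and $\lefti(P) \leq d^* - 1$. Therefore, for $i < \righti(P)$ I set $j := d^* > \lefti(P)$ to read off $p_i < p_{d^*} = t$, and for $i > \lefti(P)$ I use the left index $0 < \righti(P)$ to read off $s = p_0 < p_i$. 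The only subtle point of the whole argument is precisely this endpoint bookkeeping in the crossing case; once it is in place, both inequalities of the corollary are one line each.
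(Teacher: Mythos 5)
Your proof is correct and is precisely the derivation the paper intends: the paper states the corollary follows ``in combination'' from \cref{thm:right_left} and \cref{thm:left_right} without spelling out the details, and your case split on whether $P$ contains crossing edges, together with the boundary checks at $\lefti(P)=0$ and $\righti(P)=d^*$ (and the observation that $s<t$ forces $\righti(P)\geq 1$ and $\lefti(P)\leq d^*-1$ in the crossing case), fills in exactly those details. Nothing further is needed.
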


The results derived above characterize the global structure of a shortest $s$-$t$-path.
Next, we will investigate the local structure.
The most important consequence of the \xprop{} is that large steps are usually better than small steps.
We formalize this in the following.

\begin{figure}
	\centering
	\begin{tikzpicture}[xscale=1.3, every label/.style={align=center}]
		\node[vertex,label=below:$\strut{}\alpha_s(2)$] (alpha_s2) at (-2,0) {};
		\node[vertex,label=below:$\strut{}\alpha_s(1)$] (alpha_s1) at (-1,0) {};
		\node[vertex,label=below:$\strut{}s$\\$\strut{}\beta_t(4)$] (s) at (0,0) {};
		\node[vertex,label=below:$\strut{}\beta_s(1)$\\$\strut{}\beta_t(3)$] (beta_s1) at (1,0) {};
		\node[vertex,label=below:$\strut{}\beta_s(2)$\\$\strut{}\beta_t(2)$] (beta_s2) at (2,0) {};
		\node[vertex,label=below:$\strut{}\beta_t(1)$] (beta_t1) at (3,0) {};
		\node[vertex,label=below:$\strut{}t$] (t) at (4,0) {};
		\node[vertex] (w) at (5,0) {};
		\node[vertex,label=below:$\strut{}\alpha_t(1)$] (alpha_t1) at (6,0) {};
		\draw[curveedge]
			(alpha_s2) to (alpha_s1) to (s) to (beta_s1) to (beta_s2)
			(beta_t1) to (t) to (w) to (alpha_t1)
			(alpha_s1) to (alpha_t1)
			(beta_s2) to (alpha_t1)
			(t) to (alpha_t1)
			;
	\end{tikzpicture}
	\caption{
		Example graph with all $\alpha$- and $\beta$-vertices labeled.
		Note that $\alpha_s(2) = \alpha_s(3) = \ldots$, $\beta_s(2)=\beta_s(3)=\ldots$, $\alpha_t(1)=\alpha_t(2)=\ldots$, and $\beta_t(4)=\beta_t(5)=\ldots$.
		Furthermore, $s = \alpha_s(0) = \beta_s(0)$ and $t = \alpha_t(0) = \beta_t(0)$.
	}
	\label{fig:alpha_beta}
\end{figure}
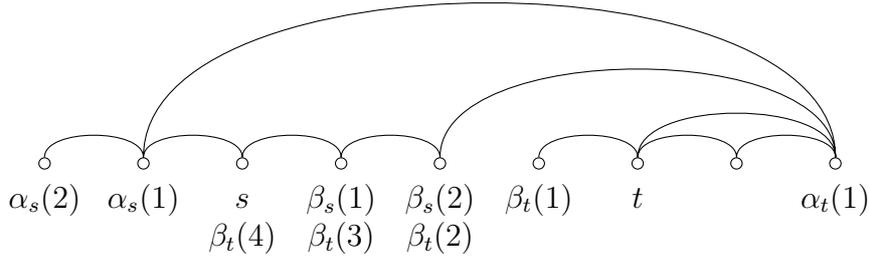

Define $\alpha_s(k)$ as the \textit{leftmost} and $\beta_s(k)$ as the \textit{rightmost} vertex that can be reached from $s$ by a path of length at most $k$
that only uses vertices $v \leq t$.
We symmetrically define $\alpha_t(k)$ as the \textit{rightmost} and $\beta_t(k)$ as the \textit{leftmost} vertex reachable from $t$ by a path of length at most $k$
using only vertices $v \geq s$.
Clearly $\alpha_s(k) \leq s \leq \beta_s(k) \leq t$ and $s \leq \beta_t(k) \leq t \leq \alpha_t(k)$ for all $k \geq 0$ (compare also \cref{fig:alpha_beta}).

The following lemma shows that there is a shortest path from~$s$ to~$\alpha_s(k)$ that uses only
vertices in $\{\alpha_s(i), \beta_s(i) \mid 0 \le i \le k\}$.
Here, $N[v] := \set{v, w ; \{v, w\} \in E(G)}$ denotes the closed neighborhood of $v$.

\begin{lemma}\label{thm:alpha_neighbor}
	Let $k \geq 1$. Then, $N[\alpha_s(k)]$ contains $\alpha_s(k-1)$ or $\beta_s(k-1)$.
\end{lemma}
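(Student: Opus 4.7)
Set $u = \alpha_s(k)$, $\alpha = \alpha_s(k-1)$, $\beta = \beta_s(k-1)$; one has $u \le \alpha$ because every path of length $\le k-1$ is also of length $\le k$. If $u = \alpha$ the lemma is immediate, so assume $u < \alpha$ and aim to show $u \sim \beta$. Take a shortest $s$-$u$-path of length $k$ (using only vertices $\le t$) and let $v$ be the predecessor of $u$. Then $v \in [\alpha, \beta]$, and crucially $v$ is not reachable from $s$ in $\le k-2$ such steps: otherwise the edge $\{v, u\}$ would place $u$ in the $(k-1)$-reachable set, forcing $u \ge \alpha$. If $v \in \{\alpha, \beta\}$ we are done, so assume $\alpha < v < \beta$.

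The key step is to take a shortest path $R = (r_0 = s, \dots, r_\ell = \beta)$ with $\ell \le k-1$ and locate its first edge whose right endpoint in value exceeds $v$. This requires the preparatory claim $s < v$. The case $s = v$ places $v$ in the $0$-reachable set, a contradiction; for the case $s > v$, I apply the symmetric argument to a shortest $s$-$\alpha$-path $Q = (q_0 = s, \dots, q_m = \alpha)$. Since $q_0 > v > q_m$, some first index $i^*$ satisfies $q_{i^*} < v \le q_{i^*-1}$. The subcase $q_{i^*-1} = v$ again forces $v$ to be $(k-2)$-reachable, contradiction; in the subcase $q_{i^*-1} > v$ one has $u < q_{i^*} < v < q_{i^*-1}$ (using $q_{i^*} \ge \alpha > u$), so the \xprop{} applied to $\{u, v\}$ and $\{q_{i^*-1}, q_{i^*}\}$ produces the edge $\{u, q_{i^*-1}\}$, placing $u$ in the $(k-1)$-reachable set and again contradicting $u < \alpha$.

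With $s < v$ in hand, apply the same trick to $R$: pick $i^*$ minimal with $r_{i^*} > v$, rule out $r_{i^*-1} = v$ as before, and observe $u < r_{i^*-1} < v < r_{i^*}$ (with $r_{i^*-1} \ge \alpha > u$ because $r_{i^*-1}$ is $(k-2)$-reachable). The \xprop{} on $\{u, v\}$ and $\{r_{i^*-1}, r_{i^*}\}$ yields $\{u, r_{i^*}\}$; if $i^* < \ell$ then $r_{i^*}$ is $(k-2)$-reachable and we reach the same contradiction, forcing $i^* = \ell$ and hence $r_{i^*} = \beta$, i.e.\ $u \sim \beta$. The step I expect to be the main obstacle is establishing $s < v$: the tempting approach using only the immediate predecessors $a, b$ of $\alpha$ and $\beta$ on their shortest paths stalls precisely when $a < v < b$, and traversing the entire shortest path to find the unique edge crossing $v$ in value is what makes both branches terminate productively.
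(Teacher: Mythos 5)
Your proof is correct and takes essentially the same route as the paper's: both examine the predecessor $v$ (the paper's $x$) of $\alpha_s(k)$ on a shortest path, split on its position relative to $s$, and in each case locate a crossing edge on a shortest path to $\alpha_s(k-1)$ resp.\ $\beta_s(k-1)$, to which the \xprop{} is applied to shorten the path or produce the desired adjacency. Your write-up is slightly more explicit about the subcase where the crossing edge's far endpoint is the terminal vertex $\beta_s(k-1)$ itself (yielding the adjacency directly rather than a contradiction), a point the paper glosses over, but the underlying argument is identical.
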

\begin{proof}
  Clearly, the statement trivially holds for~$k=1$ and if~$\alpha_s(k) = \alpha_s(k-1)$.
  Hence, assume that $k \geq 2$ and $\alpha_s(k) \neq \alpha_s(k-1)$.
  In the following, we only consider vertices to the left of $t$, which is why we assume that $t$ is the rightmost vertex for the remainder of this proof.
Let $\{x, \alpha_s(k)\}$ be the last edge of a shortest path from $s$ to~$\alpha_s(k)$.
By definition, we have $\alpha_s(k-1) \leq x$.

If $x < s$, then let $Q$ be a shortest path from $s$ to $\alpha_s(k-1)$.
If $x$ is a vertex of $Q$, then $x = \alpha_s(k-1)$ (otherwise we have a path from~$s$ to~$\alpha_s(k)$ of length at most~$k-1$ implying $\alpha_s(k)=\alpha_s(k-1)$) and we are done.
Otherwise, $Q$ contains an edge $\{y, y'\}$ with $y < x < y'$.
By the \xprop{}, there is an edge between $\alpha_s(k)$ and $y'$,
which yields a path of length at most $k-1$ between $s$ and $\alpha_s(k)$, implying $\alpha_{s}(k-1) = \alpha_s(k)$.

If $x > s$, then let $Q$ be a shortest path from $s$ to $\beta_s(k-1)$.
If~$x$ is a vertex of~$Q$, then $x = \beta_s(k-1)$ (otherwise we have a path from~$s$ to~$\alpha_s(k)$ of length at most~$k-1$ implying $\alpha_s(k)=\alpha_s(k-1)$) and we are done.
Otherwise, $Q$ contains an edge $\{y, y'\}$ with $y < x < y'$.
By the \xprop{}, there is an edge between $\alpha_s(k)$ and $y'$,
which yields a path of length at most $k-1$ between $s$ and $\alpha_s(k)$, implying $\alpha_{s}(k-1) = \alpha_s(k)$.
\end{proof}

\noindent Due to symmetry, \cref{thm:alpha_neighbor} also holds if one exchanges $\alpha$ and $\beta$, and also when $s$ is replaced by $t$.
With the following lemma, we will be able to restrict our search for a shortest $s$-$t$-path mostly to the $\alpha$ and $\beta$ vertices.

\begin{lemma}\label{thm:shortest_path_lemma}
	There is a shortest $s$-$t$-path $P=p_0,\ldots,p_{d^*}$ such that
	\begin{enumerate}[(i)]
		\item $p_i \in \set{\alpha_s(i), \beta_s(i)}$ for all $i < \righti(P)$, and 
		\item $p_i \in \set{\alpha_t(d^*-i), \beta_t(d^*-i)}$ for all $i > \lefti(P)$.
	\end{enumerate}
\end{lemma}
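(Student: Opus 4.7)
My plan is to establish (i); property (ii) then follows by the symmetric argument with the roles of $s$ and $t$ (and the left/right orientation) swapped. The crucial first step is to generalize \cref{thm:alpha_neighbor} beyond $\alpha_s(k)$ and $\beta_s(k)$: for every vertex $v \le t$ that is reachable from $s$ in exactly $k \ge 1$ steps via vertices $\le t$ (and not in fewer such steps), I claim $v$ has a neighbor in $\{\alpha_s(k-1), \beta_s(k-1)\}$. The argument mirrors that of \cref{thm:alpha_neighbor}: pick the predecessor $x$ of $v$ on some shortest such $s$-$v$-path, and if $x \notin \{\alpha_s(k-1), \beta_s(k-1)\}$, consider a shortest path $Q$ from $s$ to $\alpha_s(k-1)$ or $\beta_s(k-1)$ (chosen so that its endpoint lies on the same side of $s$ as $x$). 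An edge of $Q$ must ``jump over'' $x$, and combining it with $\{x,v\}$ via the \xprop{} produces an edge from $v$ to some vertex of $Q$; the minimality of the reachability distance of $v$ forces this neighbor to be the endpoint $\alpha_s(k-1)$ or $\beta_s(k-1)$, because an internal vertex of $Q$ would yield a path from $s$ to $v$ of length less than $k$.

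Iterating this by induction on $k$ gives: every such $v$ admits an $s$-$v$-path of length $k$ whose internal vertex at position $i$ lies in $\{\alpha_s(i), \beta_s(i)\}$ for every $0 < i < k$. Now, starting from an arbitrary shortest $s$-$t$-path $P' = p'_0, \ldots, p'_{d^*}$ with $r' = \righti(P')$, I would replace the prefix $p'_0, \ldots, p'_{r'}$ by such a canonical length-$r'$ path ending at $p'_{r'}$; since the replacement vertices are all $\le t$ while $p'_{r'}$ is already the rightmost vertex of $P'$, the resulting path $P^{(1)}$ has the same $\righti$-index $r'$ and satisfies (i). A symmetric modification of the tail segment of $P^{(1)}$ (using the analogous extension of \cref{thm:alpha_neighbor} on the $t$-side) then produces the target path $P$ satisfying (ii).

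The main obstacle is ensuring that the two modifications are compatible. In the crossing case ($\righti(P) = \lefti(P) - 1$ from \cref{thm:right_left}) the index ranges addressed by (i) and (ii) are disjoint, so the two phases do not interact. In the non-crossing case ($\lefti(P) < \righti(P)$), however, each middle vertex $p_i$ with $\lefti(P) < i < \righti(P)$ must lie simultaneously in $\{\alpha_s(i), \beta_s(i)\}$ and in $\{\alpha_t(d^*-i), \beta_t(d^*-i)\}$. Here I would exploit the sandwich inequalities of \cref{thm:left_right} (which confine every middle vertex strictly between $p_{\lefti(P)}$ and $p_{\righti(P)}$, and separate vertices before and after $\lefti(P)$ or $\righti(P)$) together with the extremality of $\alpha_s, \beta_s, \alpha_t, \beta_t$, to argue that on the overlap the canonical paths built from the $s$-side and from the $t$-side must actually select the same vertices. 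I expect this synchronization argument, reconciling the ``leftmost-or-rightmost-from-$s$'' choice with the ``leftmost-or-rightmost-from-$t$'' choice on the middle indices, to be the technically most delicate step of the proof.
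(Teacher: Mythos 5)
Your route differs genuinely from the paper's (wholesale prefix/suffix replacement instead of an index-by-index exchange), but it rests on a claim that is false. It is not true that every vertex $v \le t$ reachable from $s$ in exactly $k$ steps through vertices $\le t$ (and in no fewer) has a neighbor in $\{\alpha_s(k-1), \beta_s(k-1)\}$. Take five vertices in the order $s < v < a < b < t$ with edges $\{s,a\}$, $\{v,a\}$, $\{s,b\}$, $\{b,t\}$: no two edges cross, so the \xprop{} holds vacuously and the graph is terrain-like with $\dist(s,t)=2$. Here $\alpha_s(1)=s$ and $\beta_s(1)=b$, while $v$ has restricted distance $2$ and its only neighbor is $a \notin \{s,b\}$. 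The reason you cannot ``mirror'' the proof of \cref{thm:alpha_neighbor} is that that proof uses the extremality of $\alpha_s(k)$ to force the last edge $\{x,\alpha_s(k)\}$ to cross an edge of the canonical path $Q$; for a non-extremal $v$ the edge $\{x,v\}$ can be nested inside the edges of $Q$ (exactly as above), and then the \xprop{} yields nothing. So the induction that is supposed to produce a canonical path to $p'_{r'}$ does not get started. (The endpoint $p'_{r'}$ does in fact have the desired neighbor, but that is essentially the statement being proved; it cannot be imported from a general claim about all vertices at a given restricted distance.)

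The second gap is the one you acknowledge: the ``synchronization'' in the non-crossing case is precisely where the work is, and \cref{thm:left_right} does not deliver it. On the overlap you would need $\beta_s(i)=\beta_t(d^*-i)$ for every index $\lefti(P)<i<\righti(P)$, which is a consequence of the lemma rather than something available beforehand. The paper avoids this entirely: it repairs property (i) at the first bad index $i$ by splicing in the path $Q$ from \cref{thm:alpha_neighbor}, and then proves that whenever such a repair occurs one necessarily has $i < \lefti(P) = \lefti(P')$, so the repair never touches the indices governed by (ii). Without an argument of that kind, or a direct proof of the middle-index identity, your plan does not close.
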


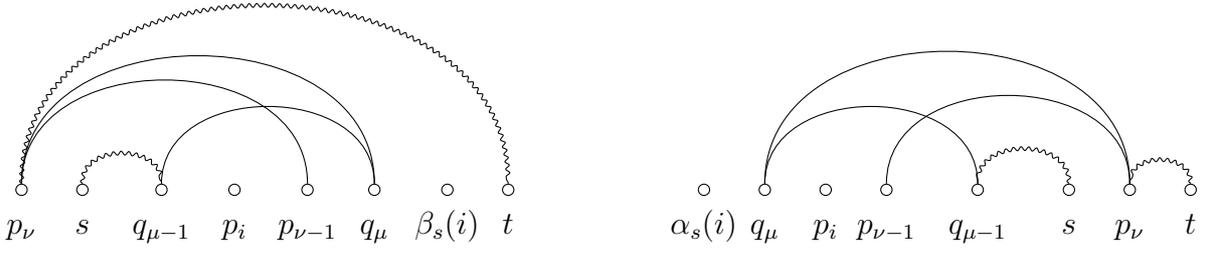
\begin{figure}
  \centering
  \begin{tikzpicture}[xscale=.8]
		\node[vertex,label=below:$\strut{}p_i$] (pi) at (-0.5,0) {};
		\node[vertex,label=below:$\strut{}s$] (s) at (-3,0) {};
		\node[vertex,label=below:$\strut{}t$] (t) at (4,0) {};
		\node[vertex,label=below:$\strut{}\beta_s(i)$] (beta) at (3,0) {};
		\node[vertex,label=below:$\strut{}q_{\mu-1}$] (qmu') at (-1.7,0) {};
		\node[vertex,label=below:$\strut{}p_{\nu-1}$] (pnu') at (0.7,0) {};
		\node[vertex,label=below:$\strut{}p_\nu$] (pnu) at (-4,0) {};
		\node[vertex,label=below:$\strut{}q_\mu$] (qmu) at (1.8,0) {};
		\draw[curveedge] (qmu') to (qmu);
		\draw[curveedge] (pnu) to (pnu');
		\draw[curveedge] (pnu) to (qmu);
		\draw[curvepath] (s) to (qmu');
		\draw[curvepath] (pnu) to (t);
	\end{tikzpicture}\hfill
	\begin{tikzpicture}[xscale=.8]
		\node[vertex,label=below:$\strut{}p_i$] (pi) at (-2,0) {};
		\node[vertex,label=below:$\strut{}s$] (s) at (2,0) {};
		\node[vertex,label=below:$\strut{}t$] (t) at (4,0) {};
		\node[vertex,label=below:$\strut{}\alpha_s(i)$] (a) at (-4,0) {};
		\node[vertex,label=below:$\strut{}q_{\mu-1}$] (qmu') at (0.5,0) {};
		\node[vertex,label=below:$\strut{}p_{\nu-1}$] (pnu') at (-1,0) {};
		\node[vertex,label=below:$\strut{}p_\nu$] (pnu) at (3,0) {};
		\node[vertex,label=below:$\strut{}q_\mu$] (qmu) at (-3,0) {};
		\draw[curveedge] (qmu') to (qmu);
		\draw[curveedge] (pnu) to (pnu');
		\draw[curveedge] (pnu) to (qmu);
		\draw[curvepath] (s) to (qmu');
		\draw[curvepath] (t) to (pnu);
\end{tikzpicture}%
             \caption{Left: The situation described in the proof of \Cref{thm:shortest_path_lemma}. Note that the proof shows that~$\mu=i$ (that is, $q_\mu =\beta_s(i)$) and~$\nu=i+1$ (that is, $p_{\nu-1}=p_i$).
              The zigzag line from~$s$ to~$q_{\mu-1}$ indicates a path of length~$i-1$, whereas the zigzag line from~$p_\nu$ to~$t$ indicates a path of length~$d^*-i-1$. Right: The (similar) situation for the case~$p_i < s$. Note that also here~$\mu=i$ and~$\nu=i+1$ holds (that is, $p_{\nu-1}=p_i$ and~$q_\mu=\alpha_s(i)$).}
\label{fig:prop(i)}
\end{figure}

\begin{proof}
	We begin by showing that there always exists a shortest $s$-$t$-path satisfying~(i) (see \Cref{fig:prop(i)} for an illustration).
	Assume towards a contradiction, that no shortest $s$-$t$-path satisfies~(i).
	Then, let $P=p_0,p_1,\dots,p_{d^*}$ be a shortest $s$-$t$-path chosen such that it maximizes $i$, where $i < \righti(P)$ is the smallest index such that $p_i\not\in\set{\alpha_s(i), \beta_s(i)}$.
	Since $i < \righti(P)$, we know that all of $p_0, \dots, p_i$ are to the left of $t$ by \cref{cor:vertex-st-order}.
        We assume that $s < p_i$ (the case $s > p_i$ is similar and uses that~$p_i\neq \alpha_s(i)$).
	Since $p_i \neq \beta_s(i)$, it holds $p_i  < \beta_s(i)$.
        Note that also~$\beta_s(i) < t$, since $\beta_s(i) = t$ would contradict $\dist(s, \beta_s(i)) = i < d^* = \dist(s, t)$.
	By \cref{thm:alpha_neighbor}, there exists a shortest path $Q=q_0,q_1,\dots,q_i$ from $s=q_0$ to $\beta_s(i)=q_i$ where~$q_j\in\{\alpha_s(j),\beta_s(j)\}$ for all $0\le j \le i$.
	Let $\mu \le i$ be an index such that $q_{\mu-1} < p_i < q_\mu$.
	Further, let $\nu > i$ be the minimal index for which $p_\nu \notin [q_{\mu-1}, q_\mu]$ (that is, $p_{\nu-1}\in [q_{\mu-1}, q_\mu]$).
	Then, $\{q_{\mu-1}, q_\mu\}$ and $\{p_{\nu-1}, p_\nu\}$ are crossing edges.
	By the \xprop{}, $G$ contains the edge $\{q_{\mu'}, p_\nu\}$, where either $\mu' = \mu$ (if $p_\nu < q_{\mu-1} < p_{\nu-1} < q_\mu$) or  $\mu' = \mu-1$ (if $q_{\mu-1} < p_{\nu-1} < q_\mu < p_\nu$).
	Hence, we obtain an $s$-$t$-walk $P'=q_0,\ldots,q_{\mu'},p_\nu,\ldots,p_{d^*}$ of length $\mu' + 1 + (d^*-\nu)$.
	Clearly, this length must be at least~$d^*$, which implies $\mu'+1 \ge \nu$.
        Since~$\mu' \le i$ and~$\nu \ge i+1$, this is only possible for $\mu' = \mu = i$ and $\nu = i + 1$.
        Hence, $P'$ has length~$d^*$, which means that~$P'$ is a shortest path from~$s$ to~$t$ (with $i < \righti(P')$).
	Recall that, by construction, $q_j\in\{\alpha_s(j),\beta_s(j)\}$ holds for all $0\le j \le i$, which contradicts the choice of $P$.
	This proves the existence of a shortest $s$-$t$-path satisfying~(i).

        Before moving on, we show that if the chosen shortest path~$P$ in the above argument satisfies~(ii), then also~$P'$ does so. This observation will be helpful for the second part of the proof.        
	To this end, note that we concluded above that $\mu' = \mu = i$ and $\nu = i+1$, which implies that $p_{i+1} < p_i$.
	Assume towards a contradiction that $p_j < p_{i+1}$ for some~$j < i$.
	Then, there must be a pair of crossing edges from the subpaths from $p_j$ to $p_i$ and from $p_{i+1}$ to $t$.
	By the \xprop{}, there is now an edge between two non-consecutive vertices of $P$, contradicting the fact that $P$ is an induced path.
	Thus, $p_{i+1}$ must be to the left of $p_0, \dots, p_i$.
        This implies $i < \lefti(P)$, and thus, $p_{\lefti(P)}$ is also a vertex of $P'$.
	Furthermore, all of $q_0, \dots, q_i$ must be to the right of $p_{\lefti(P)}$ because otherwise $P'$ would contain a pair of crossing edges from the subpaths from $s$ to $q_i$ and from $p_{\lefti(P)}$ to $t$. Thus, $\lefti(P) = \lefti(P')$.
	We conclude that $i < \lefti(P) = \lefti(P')$, that is, $P'$ also satisfies~(ii).
	
	Now, assume towards a contradiction that no shortest $s$-$t$-path satisfies both (i) and~(ii).
	Then, let~$P=p_0,p_1,\dots,p_{d^*}$ be a shortest $s$-$t$-path that satisfies~(i) and which is chosen to minimize~$i > \lefti(P)$, which is the largest index such that $p_i \not\in \set{\alpha_t(d^*-i), \beta_t(d^*-i)}$.
	Now, if we reverse the vertex ordering of~$G$, swap $s$ and $t$, and reverse the direction of $P$, then we obtain a shortest $s$-$t$-path $\tilde{P}$ satisfying~(ii) but not~(i).
        Applying the above arguments to~$\tilde{P}$ now yields a contradiction.
        
\end{proof}

Note that the only case in which \cref{thm:shortest_path_lemma} does not restrict all vertices to be $\alpha$ or~$\beta$ vertices is when $\righti(P) < \lefti(P)$, in which case \cref{thm:right_left} applies.
This will allow us to show that, when searching for a shortest $s$-$t$-path, it is sufficient to check for three possible cases (illustrated in \cref{fig:path_cases}).
In the following, we define the \textit{left} and \textit{right horizon} of a vertex $v$ as the furthest neighbor in that direction, that is, $\lhorizon{v} := \min(N[v])$ and $\rhorizon{v} := \max(N[v])$.
Remember that $d^* = \dist(s, t)$.

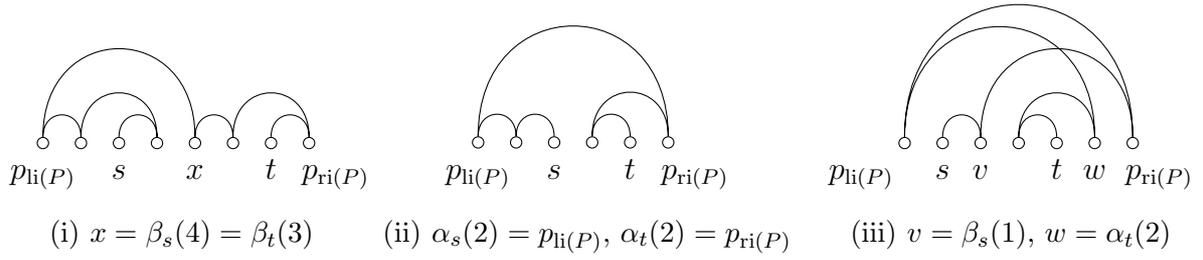
\begin{figure}
\centering
\begin{subfigure}[b]{0.3\textwidth}
	\centering
	\begin{tikzpicture}[xscale=0.5, label distance=-3]
		\node[vertex,label=below:$\strut{}p_{\lefti(P)}$] (0) at (0,0) {};
		\node[vertex] (1) at (1,0) {};
		\node[vertex,label=below:$\strut{}s$] (2) at (2,0) {};
		\node[vertex] (3) at (3,0) {};
		\node[vertex,label=below:$\strut{}x$] (4) at (4,0) {};
		\node[vertex] (5) at (5,0) {};
		\node[vertex,label=below:$\strut{}t$] (6) at (6,0) {};
		\node[vertex,label={[xshift=10]below:$\strut{}p_{\righti(P)}$}] (7) at (7,0) {};
		\draw[curveedge] (2) to (3) to (1) to (0) to (4) to (5) to (7) to (6);
	\end{tikzpicture}%
	\caption{$x = \beta_s(4) = \beta_t(3)$}
\end{subfigure}%
\begin{subfigure}[b]{0.37\textwidth}
	\centering
	\begin{tikzpicture}[xscale=0.5, label distance=-3]
		\node[vertex,label=below:\strut{}$p_{\lefti(P)}$] (0) at (0,0) {};
		\node[vertex] (1) at (1,0) {};
		\node[vertex,label=below:$\strut{}s$] (2) at (2,0) {};
		\node[vertex] (3) at (3,0) {};
		\node[vertex,label=below:$\strut{}t$] (4) at (4,0) {};
		\node[vertex,label={[xshift=10]below:$\strut{}p_{\righti(P)}$}] (5) at (5,0) {};
		\draw[curveedge] (2) to (1) to (0) to (5) to (3) to (4);
	\end{tikzpicture}%
	\caption{$\alpha_s(2) = p_{\lefti(P)}$, $\alpha_t(2) = p_{\righti(P)}$}
\end{subfigure}%
\begin{subfigure}[b]{0.33\textwidth}
	\centering
	\begin{tikzpicture}[xscale=0.5, label distance=-3]
		\node[vertex,label=-95:$\strut{}p_{\lefti(P)}$] (0) at (0,0) {};
		\node[vertex,label=below:$\strut{}s$] (1) at (1,0) {};
		\node[vertex,label=below:$\strut{}v$] (2) at (2,0) {};
		\node[vertex] (3) at (3,0) {};
		\node[vertex,label=below:$\strut{}t$] (4) at (4,0) {};
		\node[vertex,label=below:$\strut{}w$] (5) at (5,0) {};
		\node[vertex,label={[xshift=10]below:$\strut{}p_{\righti(P)}$}] (6) at (6,0) {};
		\draw[curveedge] (1) to (2) to (6) to (0) to (5) to (3) to (4);
	\end{tikzpicture}%
	\caption{$v = \beta_s(1)$, $w = \alpha_t(2)$}
\end{subfigure}
\caption{The three cases distinguished by \cref{thm:shortest_path_conditions_sufficient} (only the vertices of a shortest path are drawn).}
\label{fig:path_cases}
\end{figure}

\begin{lemma}\label{thm:shortest_path_conditions_sufficient}
	Each of the following conditions implies that $d^* \leq d$.
	\begin{enumerate}[(i)]
		\item $\beta_s(\sigma) = \beta_t(\tau)$ with $\sigma + \tau = d$.
		\item $G$ contains an edge between $\alpha_s(\sigma)$ and $\alpha_t(\tau)$ with $\sigma + \tau = d-1$.
		\item There exist $v \in \set{\alpha_s(\sigma), \beta_s(\sigma)}$ and $w \in \set{\alpha_t(\tau), \beta_t(\tau)}$ with $v < w$ and $\sigma + \tau = d-3$ such that $\rhorizon{v} \geq t$ and $\lhorizon{w} \leq s$.
	\end{enumerate}
\end{lemma}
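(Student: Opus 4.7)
First I would treat (i) and (ii) as direct constructions. For (i), by definition there is a walk of length at most $\sigma$ from $s$ to $\beta_s(\sigma)$ and a walk of length at most $\tau$ from $t$ to $\beta_t(\tau)$; since $\beta_s(\sigma) = \beta_t(\tau)$, concatenating them yields an $s$-$t$-walk of length at most $\sigma + \tau = d$, so $d^* \le d$. For (ii), joining the walks $s \to \ldots \to \alpha_s(\sigma)$ and $\alpha_t(\tau) \to \ldots \to t$ via the assumed edge $\{\alpha_s(\sigma), \alpha_t(\tau)\}$ produces a walk of length at most $\sigma + 1 + \tau = d$.

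The interesting case is~(iii). My plan is to aim for the walk $s \to \ldots \to v \to v' \to w' \to w \to \ldots \to t$ of length $\sigma + 3 + \tau = d$, where $v' := \rhorizon{v} \ge t$ and $w' := \lhorizon{w} \le s$ are witnessing neighbors of $v$ and $w$ given by the hypothesis. The $s$-to-$v$ prefix and the $w$-to-$t$ suffix are guaranteed by the definitions of $\alpha_s / \beta_s$ and $\alpha_t / \beta_t$; what must be produced is a walk from $v$ to $w$ of length at most three. A handful of degenerate situations dispose of the claim immediately: if $v = t$, $w = s$, $v' = t$, $w' = s$, or if any of $\{v, w\}$, $\{v, w'\}$, $\{v', w\}$, $\{v', w'\}$ is already an edge, then gluing onto the prefix and suffix gives an $s$-$t$-walk of length at most $d$ directly.

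After excluding these sub-cases I may assume $w' < s < t < v'$, $v < v'$, $w' < w$, and $v < w$, so that $v, v', w, w'$ are four distinct vertices whose relative order falls into one of four configurations. Two of these have $\{v, v'\}$ and $\{w', w\}$ crossing: in the ordering $v < w' < v' < w$, applying the \xprop{} to $p = v, q = w', r = v', s = w$ yields the edge $\{v, w\}$ (giving a walk of length $\sigma + 1 + \tau$); in the ordering $w' < v < w < v'$, applying the \xprop{} to $p = w', q = v, r = w, s = v'$ yields $\{v', w'\}$ (giving the full walk of length $\sigma + 3 + \tau$). In either case the bound~$d$ is attained.

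The remaining two configurations, $v < w' < w < v'$ and $w' < v < v' < w$, are nested, and I expect these to be the main obstacle since the \xprop{} does not apply to nested edges directly. Here I would go one level deeper into the $\alpha/\beta$ recursion: by \cref{thm:alpha_neighbor} (and its symmetric version for $t$) the vertex $v$ has a neighbor $u \in \{\alpha_s(\sigma-1), \beta_s(\sigma-1)\}$, and analogously for $w$. Depending on where $u$ lies relative to $w', w$ (respectively $v, v'$), the edge $\{u, v\}$ together with $\{w', w\}$ (respectively $\{v, v'\}$ together with an analogous edge at $w$) will cross, and a further application of the \xprop{} will produce either a direct $v$-$w$-edge, a two-step connection through $u$, or the missing edge $\{v', w'\}$—in every case yielding an $s$-$t$-walk of length at most $d$.
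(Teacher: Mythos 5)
Your treatment of (i), (ii), the degenerate sub-cases of (iii), and the two crossing configurations is correct and coincides with the paper's argument: the paper compresses your two orderings $v < w' < v' < w$ and $w' < v < w < v'$ into the single case ``$\{v,\rhorizon{v}\}$ and $\{w,\lhorizon{w}\}$ are crossing'' and applies the \xprop{} exactly as you do.

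The gap is in the two nested configurations, which are the crux of (iii). Your claim that, after one step of the $\alpha/\beta$ recursion, ``the edge $\{u,v\}$ together with $\{w',w\}$ will cross'' is not true in general. Consider the configuration $v < w' < w < v'$: since $w' \leq s$, the inequality $v < w'$ forces $v = \alpha_s(\sigma)$, and \cref{thm:alpha_neighbor} only guarantees a neighbor $u$ equal to $\alpha_s(\sigma-1)$ or $\beta_s(\sigma-1)$. In the first case $u$ satisfies merely $v \leq u \leq s$ and may still lie strictly to the left of $w'$, so that the edge $\{u,v\}$ lies entirely in $(-\infty, w')$ and does not cross $\{w',w\}$; no new edge is produced and you are in the same nested situation one level down. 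To close the argument you must iterate until some edge finally jumps over the barrier at $w'$, and the clean way to do this is what the paper does: take an entire shortest $s$-$v$-path $P$ (of length at most $\sigma$). Since $P$ starts at $s \geq w'$ and ends at $v < w'$, it either contains $w'$ or $w$, or it contains an edge crossing $\{w',w\}$, and in the latter case the \xprop{} yields an edge from a vertex of $P$ to $w$; in every case one gets $\dist(s,w) \leq \sigma + 1$ and hence $d^* \leq \sigma + \tau + 1 < d$ (note that in the nested case one does not even need the length-$3$ detour through $v'$ and $w'$). Your asserted trichotomy of outcomes (``direct $v$-$w$-edge, two-step connection through $u$, or the edge $\{v',w'\}$'') is never verified, so as written the nested case is a statement of intent rather than a proof.
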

\begin{proof}
	For (i) and (ii), the existence of an $s$-$t$-path of length at most $d$ follows directly from the definition of $\alpha$ and $\beta$.
	Assume now that (iii) is true.
	We may also assume $s < w$ since $s = w$ would directly imply $d^* = \tau < d$.
	Define $h_v := \rhorizon{v}$ and $h_w := \lhorizon{w}$.
	If the edges $\{v, h_v\}$ and $\{w, h_w\}$ are crossing, then the existence of an $s$-$t$-path of length $\sigma + \tau + 3$ follows from the \xprop{}.
        Otherwise, let $v \leq h_w \leq s < w$ (the case $v < t \le h_v \le w$ is symmetric).
	Let $P$ be a shortest $s$-$v$-path. Then, $P$ either crosses the edge $\{w, h_w\}$ or contains~$w$ or~$h_w$.
	In any case, we obtain an $s$-$w$-path of length at most $\sigma + 1$ and thus an $s$-$t$-path of length at most $\sigma + \tau + 1 < d$.
\end{proof}

\begin{lemma}\label{thm:shortest_path_conditions_necessary}
	At least one of the conditions (i)--(iii) in \cref{thm:shortest_path_conditions_sufficient} holds for $d = d^*$.
\end{lemma}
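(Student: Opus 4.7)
The plan is to take a shortest $s$-$t$-path $P = p_0, \ldots, p_{d^*}$ guaranteed by \cref{thm:shortest_path_lemma}, and to split into three cases based on the relative position of $\lefti(P)$ and $\righti(P)$, matching the three subfigures of \cref{fig:path_cases}. Since $s < t$, we always have $\lefti(P) \neq \righti(P)$, so the cases $\righti(P) < \lefti(P)$, $\righti(P) = \lefti(P) + 1$, and $\righti(P) > \lefti(P) + 1$ are exhaustive; these will yield conditions~(iii), (ii), and~(i) of \cref{thm:shortest_path_conditions_sufficient}, respectively.

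For the first case (crossing edges), I would invoke \cref{thm:right_left} to conclude $\righti(P) = \lefti(P) - 1$, and then set $v := p_{\righti(P)-1}$, $w := p_{\lefti(P)+1}$, $\sigma := \righti(P) - 1$, and $\tau := d^* - \lefti(P) - 1$, so that $\sigma + \tau = d^* - 3$. \Cref{thm:shortest_path_lemma} immediately places $v \in \{\alpha_s(\sigma), \beta_s(\sigma)\}$ and $w \in \{\alpha_t(\tau), \beta_t(\tau)\}$; \cref{thm:right_left} delivers $v < w$; and the path edges incident to $v$ and $w$ give $\rhorizon{v} \geq p_{\righti(P)} \geq t$ and $\lhorizon{w} \leq p_{\lefti(P)} \leq s$, which is exactly condition~(iii).

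In the second case, the edge $\{p_{\lefti(P)}, p_{\righti(P)}\}$ lies on $P$. The key observation is that $p_{\lefti(P)} \leq s$ (since it is the leftmost vertex of $P$) combined with $\beta_s(\cdot) \geq s$ forces $p_{\lefti(P)} = \alpha_s(\lefti(P))$ in \cref{thm:shortest_path_lemma}; symmetrically $p_{\righti(P)} = \alpha_t(d^* - \righti(P))$. Setting $\sigma := \lefti(P)$ and $\tau := d^* - \righti(P)$ gives $\sigma + \tau = d^* - 1$ and condition~(ii). For the third case, I would pick any middle index $j$ strictly between $\lefti(P)$ and $\righti(P)$; by \cref{cor:vertex-st-order}, $s < p_j < t$, so the $\alpha$-alternatives in \cref{thm:shortest_path_lemma} are ruled out on both sides, giving $p_j = \beta_s(j) = \beta_t(d^* - j)$ and hence condition~(i) with $\sigma + \tau = d^*$.

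The heavy lifting has already been done in \cref{thm:shortest_path_lemma}; the main remaining care is the book-keeping of the case split and the boundary cases $\lefti(P) = 0$ or $\righti(P) = d^*$, where the extreme path vertex coincides with $s$ or $t$. In those degenerate situations the $\alpha$-versus-$\beta$ distinction collapses because $\alpha_s(0) = s = \beta_s(0)$ and $\alpha_t(0) = t = \beta_t(0)$, so no separate argument is required.
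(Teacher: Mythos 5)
Your proposal is correct and matches the paper's proof essentially step for step: both take the structured shortest path from \cref{thm:shortest_path_lemma} and split on the relative order of $\lefti(P)$ and $\righti(P)$ to obtain conditions (iii), (ii), and (i) with the same choices of $v$, $w$, $\sigma$, $\tau$. The only cosmetic differences are that the paper fixes $\sigma = \lefti(P)+1$ in the third case where you allow any intermediate index, and it rules out the $\alpha$-alternatives via $p_\sigma \notin \{s,t\}$ rather than via \cref{cor:vertex-st-order}.
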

\begin{proof}
  Let $P=p_0,p_1,\dots,p_{d^*}$ be a shortest $s$-$t$-path satisfying conditions~(i) and~(ii) as described in \cref{thm:shortest_path_lemma}.
  
  If $\lefti(P) < \righti(P) - 1$, then, for $\sigma := \lefti(P) + 1$, it holds $0 < \sigma < d^*$ and $p_\sigma \in \set{\alpha_s(\sigma), \beta_s(\sigma)} \cap \set{\alpha_t(d^*-\sigma), \beta_t(d^*-\sigma)}$.
  By definition, for all~$j$, we have
  \begin{align*}
    &\alpha_s(j) \leq s < t \leq \alpha_t(j),\\
    &s \le \beta_s(j) \le t,\\
    &s \le \beta_t(j) \le t.
  \end{align*}
  Clearly, $p_\sigma \notin \{s, t\}$, thus we must have $p_\sigma = \beta_s(\sigma) = \beta_t(d^* - \sigma)$, and therefore (i) is true with $\tau := d^* - \sigma$.

  If $\lefti(P) = \righti(P) -1$, then~$P$ contains the edge~$\{p_{\lefti(P)},p_{\righti(P)}\}$.
  Moreover, $p_{\lefti(P)} = \alpha_s(\lefti(P))$ and $p_{\righti(P)} = \alpha_t(d^*-\righti(P))$ clearly holds.
  Hence, (ii) is true with $\sigma := \lefti(P)$ and $\tau := d^* - \righti(P)$.
  
  If $\righti(P) < \lefti(P)$, then, by \cref{thm:right_left}, the two edges $\{p_{\righti(P)-1},p_{\righti(P)}\}$ and $\{p_{\lefti(P)},p_{\lefti(P)+1}\}$ are crossing, that is, $p_{\righti(P)-1} < p_{\lefti(P)+1}$ and $\righti(P) = \lefti(P) -1$.
  Thus, (iii) holds with $v := p_{\righti(P)-1}$, $w := p_{\lefti(P)+1}$,
  $\sigma := \righti(P)-1$, and $\tau := d^* - (\lefti(P) + 1)$.
  Note that $\rhorizon{p_{\righti(P)-1}} \geq p_{\righti(P)} \geq t$ and $\lhorizon{p_{\lefti(P)+1}} \leq p_{\lefti(P)} \leq s$.
\end{proof}

We are now ready to present our shortest path algorithm (\cref{alg:shortest_path}).
It uses the following two definitions of the neighbors of a vertex~$v$ that are closest to another vertex~$a$.
\begin{align*}
\rclosest{a}{v} &:= \min\set{x \in N[v] \mid x \geq a}, \\
\lclosest{a}{v} &:= \max\set{x \in N[v] \mid x \leq a}.
\end{align*}
If the sets on the right hand side are empty, we treat these values as $\infty$ resp.\ $-\infty$.
For given vertices $v$ and $a$, these two neighbors can be computed in $\bigO(\log(\deg(v)))$ time using binary search on the (sorted) adjacency list of $v$.
It is also possible to compute them for a fixed vertex~$v$ and all other vertices $a$ in $\bigO(n)$ time.

\cref{alg:shortest_path} iteratively computes the values $\alpha_s(k)$, $\beta_s(k)$, $\alpha_t(k)$, and $\beta_t(k)$ for increasing values of $k$ and checks each time, whether one of the three cases of \cref{thm:shortest_path_conditions_sufficient} holds.
Condition~(i) is easy to test. To test condition (iii), the algorithm stores the values
\begin{align*}
  q_s &:= \min\set{\sigma \le k ; \rhorizon{\alpha_s(\sigma)} \geq t \;\lor\; \rhorizon{\beta_s(\sigma)} \geq t} \text{ and}\\
  q_t &:= \min\set{\tau \le k; \lhorizon{\alpha_t(\tau)} \leq s \;\lor\; \lhorizon{\beta_t(\tau)} \leq s}.
\end{align*}
In contrast, checking condition (ii) requires slightly more effort.
To test for the existence of an edge~$\{\alpha_s(\sigma),\alpha_t(\tau)\}$ with~$\sigma \le \tau$, we store the computed values for $\alpha_s(\sigma)$ in a list~$Q_s$, sorted in such a way that the vertices $r_\sigma := \rclosest{t}{\alpha_s(\sigma)}$ are ordered from left to right.
Whenever the iteration reaches a value $\tau$ such that $\alpha_t(\tau)$ is to the right of the leftmost of these $r_\sigma$, we have an edge of the form $\{\alpha_s(\sigma), \alpha_t(\tau)\}$ with $\sigma \leq \tau$ (by the \xprop{}).
To test for edges $\{\alpha_s(\sigma),\alpha_t(\tau)\}$ with $\sigma \geq \tau$, we use a symmetrical procedure with a list~$Q_t$.

\begin{algorithm}[t]
\begin{algorithmic}[1]
\Input{A terrain-like graph~$G$ and two vertices $s < t$.}
\Output{$\dist(s, t)$.}
\State $\alpha_s \gets \beta_s \gets s$
\Comment{$\alpha_s(k)$, $\beta_s(k)$ for the current value of $k$}
\State $\alpha_t \gets \beta_t \gets t$
\Comment{$\alpha_t(k)$, $\beta_t(k)$ for the current value of $k$}
\State $d_s[v] \gets d_t[v] \gets \infty \ \forall v \in V(G)$\label{line:init}
\Comment{Distances from $s$ resp.\ $t$}
\State $d \gets \infty$
\Comment{Smallest $s$-$t$-distance found so far}
\State $Q_s \gets [s]$
\Comment{Candidate vertices for test (ii) ($\sigma \leq \tau$), sorted by ascending $\rclosest{t}{}$}
\State $Q_t \gets [t]$
\Comment{Candidate vertices for test (ii) ($\sigma \geq \tau$), sorted by descending $\lclosest{s}{}$}
\State $q_s \gets q_t \gets \infty$
\Comment{Distances for test (iii)}

\For{$k \gets 0, 1, \dots, d$}\label{line:for}
	\State \Call{UpdateDistances}{$k$}
	\Comment{Updates $d_s$, $d_t$, $Q_s$, $Q_t$, $q_s$, and $q_t$}
	\If{$\beta_s \geq \beta_t$}
		\Comment{Test (i)}
		\State $d \gets \min\set{d, \; d_s[\beta_s] + d_t[\beta_s], \; d_s[\beta_t] + d_t[\beta_t]}$ \label{line:update_i}
	\EndIf
	\While{$\alpha_t \geq \rclosest{t}{\head(Q_s)}$}
		\Comment{Test (ii) for $\sigma \le \tau$}
		\State $d \gets \min\set{d, \; d_s[\head(Q_s)] + d_t[\alpha_t] + 1}$ \label{line:update_iia}
		\State remove the head from $Q_s$
	\EndWhile
	\While{$\alpha_s \leq \lclosest{s}{\head(Q_t)}$}
		\Comment{Test (ii) for $\sigma \ge \tau$}
		\State $d \gets \min\set{d, \; d_t[\head(Q_t)] + d_s[\alpha_s] + 1}$ \label{line:update_iib}
		\State remove the head from $Q_t$
	\EndWhile
	\If{$q_s < \infty$ and $q_t < \infty$}
		\Comment{Test (iii)}
		\State $d \gets \min\set{d, \; q_s + q_t + 3}$ \label{line:update_iii}
	\EndIf
	\State \Call{ExtendSearchRange}{}
	\Comment{Updates $\alpha_s$, $\beta_s$, $\alpha_t$, and $\beta_t$ to $k+1$}
\EndFor
\State \Return $d$
\end{algorithmic}
\caption{Shortest Path}
\label{alg:shortest_path}
\end{algorithm}

\begin{algorithm}[t]
\begin{algorithmic}[1]
\Procedure{UpdateDistances}{$k$}
	\State $d_s[\alpha_s] \gets \min\set{k, d_s[\alpha_s]}$
	\State $d_t[\alpha_t] \gets \min\set{k, d_t[\alpha_t]}$
	\State $d_s[\beta_s] \gets \min\set{k, d_s[\beta_s]}$
	\State $d_t[\beta_t] \gets \min\set{k, d_t[\beta_t]}$
	\If{$\max\set{\rhorizon{\alpha_s}, \; \rhorizon{\beta_s}} \geq t$}
		\State $q_s \gets \min\{q_s, k\}$
	\EndIf
	\If{$\min\set{\lhorizon{\alpha_t}, \; \lhorizon{\beta_t}} \leq s$}
		\State $q_t \gets \min\{q_t, k\}$
	\EndIf
	\If{$\rclosest{t}{\alpha_s} < \rclosest{t}{\head(Q_s)}$}
		\State add $\alpha_s$ to the front of $Q_s$
	\EndIf
	\If{$\lclosest{s}{\alpha_t} > \lclosest{s}{\head(Q_t)}$}
		\State add $\alpha_t$ to the front of $Q_t$
	\EndIf
\EndProcedure
\end{algorithmic}
\label{alg:shortest_path_updatedistances}
\caption{\textsc{UpdateDistances}}
\end{algorithm}

\begin{algorithm}[t]
\begin{algorithmic}[1]
\Procedure{ExtendSearchRange}{}
	\State $\alpha_s' \gets \min_{x \in \set{\alpha_s, \beta_s}}(\lhorizon{x})$
	\State $\beta_s' \gets \max_{x \in \set{\alpha_s, \beta_s}}(\lclosest{t}{x})$
	\State $\alpha_s \gets \alpha_s'$
	\State $\beta_s \gets \beta_s'$
	\State $\alpha_t' \gets \max_{x \in \set{\alpha_t, \beta_t}}(\rhorizon{x})$
	\State $\beta_t' \gets \min_{x \in \set{\alpha_t, \beta_t}}(\rclosest{s}{x})$
	\State $\alpha_t \gets \alpha_t'$
	\State $\beta_t \gets \beta_t'$
\EndProcedure
\end{algorithmic}
\label{alg:shortest_path_extendsearchrange}
\caption{\textsc{ExtendSearchRange}}
\end{algorithm}

\begin{theorem}
	\Cref{alg:shortest_path} computes $d^* = \dist(s, t)$ in $\bigO(d^*)$ time (excluding the time to compute all $\lclosest{a}{v}$ and $\rclosest{a}{v}$).
\end{theorem}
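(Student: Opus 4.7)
The plan is to prove both correctness (the returned value equals~$d^*$) and the $\bigO(d^*)$ running-time bound.

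\textbf{Loop invariant.} First, I would establish by induction on~$k$ that at the beginning of iteration~$k$ of the for-loop the local variables $\alpha_s, \beta_s, \alpha_t, \beta_t$ equal $\alpha_s(k), \beta_s(k), \alpha_t(k), \beta_t(k)$, and that the arrays $d_s, d_t$ record the correct distances from~$s$, resp.~$t$, to every $\alpha$- or $\beta$-vertex already discovered. The base case $k = 0$ is given by the initialisation. For the inductive step, \Cref{thm:alpha_neighbor} and its three symmetric counterparts imply that $\alpha_s(k+1) \in N[\alpha_s(k)] \cup N[\beta_s(k)]$ (and analogously for the other three quantities), so the \textsc{ExtendSearchRange} calls using $\lhorizon{\cdot}$ and $\lclosest{t}{\cdot}$ compute exactly the correct values for the next iteration.

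\textbf{Soundness of the tests.} Next, I would argue that whenever the algorithm updates~$d$, it does so because an $s$-$t$-path of that length really exists. Test~(i) is condition~(i) of \Cref{thm:shortest_path_conditions_sufficient} verbatim. Test~(iii) uses the stored $q_s, q_t$ that, by construction in \textsc{UpdateDistances}, are the smallest indices at which the right (resp.~left) horizon of an $\alpha$- or $\beta$-vertex has reached past~$t$ (resp.~$s$), exactly matching condition~(iii). Test~(ii) is the most delicate step: I would show that \textsc{UpdateDistances} maintains the invariant that $Q_s$ lists the $\alpha_s$-values seen so far in strictly decreasing order of $\rclosest{t}{\cdot}$, so that $\head(Q_s)$ is always the best candidate. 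When the inner while-condition $\alpha_t \ge \rclosest{t}{\head(Q_s)}$ fires, $\alpha_t$ lies to the right of a neighbour of $\head(Q_s)$ which is itself to the right of~$t$; applying the \xprop{} across this neighbour and $\alpha_t$ yields the edge required by condition~(ii). The other while-loop is symmetric.

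\textbf{Completeness.} For the returned $d$ to equal $d^*$, I would invoke \Cref{thm:shortest_path_conditions_necessary}: at some $k \le d^*$, one of (i)--(iii) must hold with the currently computed $\alpha$- and $\beta$-vertices as witnesses. Cases~(i) and~(iii) are directly captured by the corresponding tests. For case~(ii), the subtle point is verifying that the witness $\alpha_s(\sigma)$ (or $\alpha_t(\tau)$) has not been removed from $Q_s$ (resp.~$Q_t$) before iteration~$\tau$. I would argue that a removal from $Q_s$ happens only after the associated $d$-update has already fired, and that the \xprop{} forces such an earlier update to produce a value no larger than the witness in question would yield, so no necessary witness is ever lost.

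\textbf{Running time.} The for-loop runs at most $d^*+1$ times, since $d$ is updated to at most~$d^*$ by the first firing condition and the loop bound then becomes~$d^*$. Each call of \textsc{UpdateDistances} and \textsc{ExtendSearchRange} performs a constant number of $\lhorizon{\cdot}$, $\rhorizon{\cdot}$, $\lclosest{}{}$, and $\rclosest{}{}$ lookups, which are counted as constant by assumption. The two while-loops cost amortised $\bigO(1)$ per iteration, because every vertex is appended to $Q_s$ or $Q_t$ at most once (and only at the front), so the total number of removals is bounded by the total number of insertions, which is $\bigO(d^*)$. Summing yields $\bigO(d^*)$ total time. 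The main obstacle I anticipate is the invariant maintenance for $Q_s$ and $Q_t$, particularly showing that early removals are safe and that their sort order is preserved across \textsc{UpdateDistances} calls; this is where the interaction between the \xprop{} and the data structures becomes most delicate.
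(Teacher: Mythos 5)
Your overall architecture is the same as the paper's proof: the loop invariants that $\alpha_s,\beta_s,\alpha_t,\beta_t$ equal $\alpha_s(k),\beta_s(k),\alpha_t(k),\beta_t(k)$ (maintained by \textsc{ExtendSearchRange} via \cref{thm:alpha_neighbor}) and that $d_s,d_t$ store true distances; soundness of each test against \cref{thm:shortest_path_conditions_sufficient}; completeness via \cref{thm:shortest_path_conditions_necessary}; and the amortized charge of the while-loop iterations to the at most $d^*+1$ insertions into $Q_s$ and $Q_t$. Your explicit handling of the worry that a witness might be removed from $Q_s$ too early is sound (any removal triggers an update whose value is sandwiched by the invariant $d \geq d^*$) and is, if anything, spelled out more carefully than in the paper.

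There is, however, one step that fails as written: the soundness of test (ii). Writing $h := \head(Q_s)$ and $r_h := \rclosest{t}{h}$, you claim that when $\alpha_t \geq r_h$ fires, ``applying the \xprop{} across this neighbour and $\alpha_t$ yields the edge required by condition~(ii),'' i.e.\ an edge $\{h, \alpha_t\}$. That edge need not exist: the \xprop{} requires two crossing edges, and here you have only the single edge $\{h, r_h\}$ together with the inequality $r_h \leq \alpha_t$ (for instance $r_h$ could be the rightmost neighbor of $h$, strictly left of $\alpha_t$). The update on \cref{line:update_iia} must therefore be justified without exhibiting that edge. The paper does this by taking a shortest path $P$ from $t$ to $\alpha_t(k)$, which uses only vertices $\geq s \geq h$, starts at $t \leq r_h$ and ends at or to the right of $r_h$; hence $P$ either passes through $r_h$ or contains an edge crossing $\{h, r_h\}$, and only then does the \xprop{} produce an edge from $h$ to some vertex of $P$, giving $\dist(t, h) \leq \dist(t, \alpha_t(k)) + 1$ and thus $d_s[h] + d_t[\alpha_t] + 1 \geq d^*$. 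So the value compared on \cref{line:update_iia} is the length of a genuine $s$--$t$ walk even though no single edge $\{h,\alpha_t\}$ is claimed; without this detour through a crossing argument, invariant $d \geq d^*$ is not established for test (ii). The remainder of your proposal (tests (i) and (iii), completeness, running time) matches the paper.
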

\begin{proof}
	We begin by showing the following invariants that are maintained throughout the computation.
	\begin{enumerate}
		\item At the beginning of each iteration of the for loop (\cref{line:for}), we have $\alpha_s = \alpha_s(k)$, $\alpha_t = \alpha_t(k)$, $\beta_s = \beta_s(k)$, and $\beta_t = \beta_t(k)$.
			This is ensured by the \textsc{ExtendSearchRange} procedure (based on \cref{thm:alpha_neighbor}).
		\item If $d_s[v] < \infty$ holds at some point, then $d_s[v] = \dist(s, v)$ (the same holds for~$d_t[v]$).
			This is ensured by the \textsc{UpdateDistances} procedure.
		\item At every point, $d \geq d^*$ holds.
			This is obvious for \cref{line:update_i}.
		
			For \cref{line:update_iia}, let $ \alpha_s(h) = \head(Q_s)$, $r_h = \rclosest{t}{\alpha_s(h)}$, and 
			let $P$ be a shortest path from $t$ to $\alpha_t(k)$.
			Then, either $P$ contains $r_h$ or crosses the edge $\{\alpha_s(h), r_h\}$ in which case the \xprop{} implies the existence of an edge between~$\alpha_s(h)$ and another vertex on~$P$.
			In any case $\dist(t, \alpha_s(h)) \leq \dist(t, \alpha_t(k)) + 1$.
                        Thus, $\dist(s,t) \le \dist(s,\alpha_s(h)) + \dist(t, \alpha_t(k)) + 1$.
			A symmetrical argument works for \cref{line:update_iib}.
			
			For \cref{line:update_iii}, if $q_s + q_t < \infty$, then, according to \textsc{UpdateDistances}, we found vertices $v_s\in\{\alpha_s(q_s),\beta_s(q_s)\}$ and $v_t\in\{\alpha_t(q_t),\beta_t(q_t)\}$ 
			with $\dist(s, v_s) = q_s$ and $h_s := \rhorizon{v_s} \geq t$,
			and $\dist(t, v_t) = q_t$ and $h_t := \lhorizon{v_t} \leq s$.
                        We can assume that $\beta_s(k) < \beta_t(k)$, since otherwise test (i) would already have succeeded and set $d \leq q_s + q_t$.
                        Therefore, we have $v_s \leq \beta_s(k) < \beta_t(k) \leq v_t $.
                        Thus, condition~(iii) of \Cref{thm:shortest_path_conditions_sufficient}
                        holds which implies $\dist(s,t) \leq q_s + q_t + 3$.
                        
                      \end{enumerate}

        Having established the above invariants, we claim that $d \leq d^*$ holds after the for loop has reached $k=d^*$.
        By \cref{thm:shortest_path_conditions_necessary}, we know that one of the conditions (i)--(iii) is true.

        If condition (i) is true, then there are $\sigma$ and $\tau$ with $\sigma + \tau = d^*$ and $\beta_s(\sigma) = \beta_t(\tau)$.
	Then, $d$ will be set to $d^*$ in \cref{line:update_i} as soon as $k$ reaches $\max\{\sigma, \tau\}$.
	
	If condition (ii) is true, then there are $\sigma$ and $\tau$ with $\sigma + \tau = d^*-1$ such that $\alpha_s(\sigma)$ and $\alpha_t(\tau)$ are connected by an edge.
	As soon as $k = \sigma$, the vertex $\alpha_s(\sigma)$ will be added to~$Q_s$ by \textsc{UpdateDistances} and as soon as $k = \tau$, the vertex $\alpha_t(\tau)$ will be added to $Q_t$.
	Therefore, when $k$ reaches the value $\max\{\sigma, \tau\}$, then $d$ will be set to~$d^*$ either in \cref{line:update_iia} or \cref{line:update_iib} (depending on whether $\sigma \leq \tau$).
	
	If condition (iii) is true, then there are $\sigma$ and $\tau$ with $\sigma + \tau = d^*-3$ 
	and $v \in \set{\alpha_s(\sigma), \beta_s(\sigma)}$, $w \in \set{\alpha_t(\tau), \beta_t(\tau)}$ such that $\rhorizon{v} \geq t$ and $\lhorizon{w} \leq s$.
	Then, $q_s$ will be set to $\sigma$ by \textsc{UpdateDistances} when $k=\sigma$ and $q_t$ will be set to $\tau$ when $k = \tau$.
	Therefore, when $k$ reaches the value $\max\{\sigma, \tau\}$, then $d$ will be set to $q_s + q_t + 3  = d^*$ in \cref{line:update_iii}.
	
	Since $d \geq d^*$ is an invariant, the for loop will at some point reach $k = d^*$, and thus, \Cref{alg:shortest_path} outputs exactly~$d^*$.
	
	It remains to prove the running time.
	As shown above, the for loop is repeated at most $d^*+1$ times.
	Apart from the inner while loops, each iteration only takes constant time.
	Since at most one vertex is added to $Q_s$ and $Q_t$ in each of these iterations, the inner while loops are also iterated at most $d^*+1$ times overall.

	Note that initializing all $d_s[v]$ and $d_t[v]$ with $\infty$ in \cref{line:init} is only done for ease of notation.
	In practice, these distance values only need to be stored once they get updated by \textsc{UpdateDistances}.
\end{proof}

How does the running time change if we account for the computation of $\lclosest{a}{v}$ and $\rclosest{a}{v}$?
\Cref{alg:shortest_path} requires knowledge of these values for all $a \in \{s, t\}$ and $v \in \set{\alpha_b(i), \beta_b(i) ; b \in \{s, t\}, i \leq d^*}$.
This amounts to $\bigO(d^*)$ many computations, each taking $\bigO(\log(\Delta))$ time where $\Delta$ is the maximal degree of any vertex.
The overall running time is thus in $\bigO(d^* \cdot \log(\Delta))$.
Alternatively, one can precompute these values for all $a, v \in V(G)$ in $\bigO(n^2)$ time and then run \cref{alg:shortest_path} in output-optimal time $\bigO(d^*)$ for any pair $s, t$.

Note that even though we always assumed $\dist(s, t) < \infty$, \cref{alg:shortest_path} can easily be extended to handle the case $\dist(s, t) = \infty$ by additionally testing whether any of the vertices $\alpha_s, \beta_s, \alpha_t, \beta_t$ was updated in \textsc{ExtendSearchRange}. If not, then the algorithm terminates.

\section{Dominating Set on Funnel Visibility Graphs}\label{sec:dominatingset}

In this section we consider the \textsc{Dominating Set} problem (which is a variant of the \textsc{Art Gallery} or \textsc{Guarding} problem in the context of visibility graphs) on a subclass of terrain visibility graphs called \emph{funnel} (or \emph{tower} \cite{ColleyTowers97}) visibility graphs~\cite{ChoiFunnels}.

\problemdef{Dominating Set}
{An undirected graph~$G$ and an integer~$k$.}
{Does~$G$ contain~$k$ vertices such that every other vertex is adjacent to at least one of them?}

A funnel is a terrain that has exactly one convex vertex (called the \emph{bottom}) and whose leftmost and rightmost vertex see each other\footnote{We remark that our results are easily extended to the case when the two outer vertices do not see each other.} (see \Cref{fig:funnel}).
Funnels appear in several visibility-related tasks in geometry and their visibility graphs are linear-time recognizable~\cite{ChoiFunnels}. Funnel visibility graphs are characterized precisely as bipartite permutation graphs with an added Hamiltonian cycle~\cite{ColleyTowers97}.
In the following, we assume to be given the graph together with the corresponding vertex coordinates of the funnel (which can be precomputed in linear time~\cite{ChoiFunnels}).

\textsc{Dominating Set} is NP-hard on polygon visibility graphs~\cite{LS95} but it is open whether NP-hardness also holds on terrain visibility graphs.
\textsc{Dominating Set} is solvable in linear time on permutation graphs~\cite{OptPermDom}, but due to the added Hamilton cycle, that approach seems not to be applicable for funnel visibility graphs.
We give an~$\bigO(n^4)$-time algorithm solving \textsc{Dominating Set} on funnel visibility graphs.

Let $L=\{\lambda_0,\ldots,\lambda_{n_L}\}$ and $R=\{\rho_0,\ldots,\rho_{n_R}\}$ be the vertices to the left resp.\ right of the bottom vertex $\lambda_0=\rho_0$, where~$L$ and~$R$ are ordered by increasing $y$-coordinate.
$L$ and $R$ are also referred to as the two \emph{chains} of the funnel.
We define $\ind(\lambda_j) = \ind(\rho_j) = j$ as the index of a vertex in its corresponding chain.
The following observation is immediate.

\begin{observation}\label{obs:domination_obs}
  For each vertex $v$, the sets $N[v] \cap L$ and $N[v] \cap R$ are both consecutive subsets of $L$ respectively $R$, that is,
    $N[v]\cap L = \{\lambda_i| i\in[l_L,u_L]\}$ and $N[v]\cap R = \{\rho_i| i\in[l_R,u_R]\}$ for some $0 \le l_L \le u_L \le n_L$ and $0 \le l_R \le u_R \le n_R$.
\end{observation}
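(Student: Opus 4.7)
The plan is to reduce the claim to the unimodality of slopes from an external point to a concave chain. By the left-right symmetry of the funnel, it suffices to prove that $N[v]\cap R$ is consecutive in $R$; the analogous statement for $L$ then follows.

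I would first handle the case $v\in R$. Since every interior vertex of $R$ is reflex, the consecutive segment slopes $\mathrm{slope}(\rho_{m-1},\rho_m)$ strictly decrease in $m$, so the polygonal chain $R$ (in $x$-order, which matches the $y$-order we use for indexing) is the graph of a strictly concave piecewise-linear function. A standard concavity argument then gives that any chord $\rho_i\rho_j$ with $j-i\ge 2$ lies strictly below every interior $\rho_\ell$, so no two non-consecutive vertices of $R$ see each other; hence $N[\rho_k]\cap R$ equals $\{\rho_{k-1},\rho_k,\rho_{k+1}\}\cap R$, a consecutive block.

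For the cross-chain case $v=\lambda_k$ with $k\ge 1$, I would introduce the slopes $s_m:=(\rho_m^y-v^y)/(\rho_m^x-v^x)$ from $v$ to each $\rho_m$. Because every $\rho_m$ lies strictly right of $v$, the condition ``$v$ sees $\rho_m$'' is equivalent to $s_m>s_\ell$ for all $0\le\ell<m$ (no $R$-blocker above the line $v\rho_m$) together with $s_m>\sigma_L:=\max_{1\le j\le k-1}\mathrm{slope}(v,\lambda_j)$ (no $L$-blocker between $v$ and $\rho_0$; vacuous for $k=1$). The key geometric claim is that $(s_m)_{m=0}^{n_R}$ is strictly unimodal: there is a peak index $t$ with $s_0<s_1<\cdots<s_t$ and $s_t>s_{t+1}>\cdots>s_{n_R}$. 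I would prove this inductively from the reflex inequality $\mathrm{slope}(\rho_{m-1},\rho_m)>\mathrm{slope}(\rho_m,\rho_{m+1})$: first observe that each $s_{m+1}$ is a strict convex combination of $s_m$ and $\mathrm{slope}(\rho_m,\rho_{m+1})$ (with weights proportional to the positive horizontal distances $\rho_m^x-v^x$ and $\rho_{m+1}^x-\rho_m^x$), so $s_m\ge s_{m+1}$ forces $\mathrm{slope}(\rho_m,\rho_{m+1})\le s_{m+1}$; the reflex inequality then yields $\mathrm{slope}(\rho_{m+1},\rho_{m+2})<s_{m+1}$, and the convex-combination identity applied once more gives $s_{m+2}<s_{m+1}$.

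Given strict unimodality, the strict left-to-right maxima of $(s_m)$ form the initial segment $\{0,1,\dots,t\}$, and $(s_m)$ is strictly increasing on this segment. Further intersecting with the threshold condition $s_m>\sigma_L$ discards at most an initial portion and leaves a consecutive tail $\{t',\dots,t\}$, which equals $N[v]\cap R$. The main obstacle is the discrete unimodality step outlined above: the continuous analogue (slope from an external point through a concave curve is unimodal) is standard, but the discrete counterpart requires the weighted-average identity for $s_{m+1}$ and careful sign bookkeeping with the positive denominators $\rho_m^x-v^x$.
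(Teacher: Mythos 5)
Your argument is correct in substance, but be aware that the paper offers no proof here at all --- it declares the observation ``immediate,'' the implicit reason being exactly what you formalize: each chain of a funnel is a concave chain, and the set of vertices of a concave chain visible from any fixed point is contiguous. Your route --- reducing visibility of $\rho_m$ from $v$ to ``$s_m$ is a strict left-to-right maximum of the slope sequence and exceeds the threshold $\sigma_L$,'' and then proving unimodality of $(s_m)$ via the mediant identity writing $s_{m+1}$ as a positively weighted average of $s_m$ and $\operatorname{slope}(\rho_m,\rho_{m+1})$ --- is a complete and valid formalization. Two small corrections. First, a reflex vertex is defined only by its neighbors \emph{not} seeing each other, and since visibility requires a strict inequality this yields only the non-strict relation $\operatorname{slope}(\rho_{m-1},\rho_m)\ge\operatorname{slope}(\rho_m,\rho_{m+1})$; your claims of \emph{strictly} decreasing segment slopes and \emph{strict} unimodality after the peak are therefore overstated. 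This is harmless: blocking only requires a vertex to lie \emph{on or above} the sight line, so non-strict concavity already destroys visibility between non-consecutive vertices of one chain, and your inductive step still gives that once $s_m\ge s_{m+1}$ the sequence is non-increasing from then on, which is all that is needed for the strict left-to-right maxima to form an initial segment on which $(s_m)$ strictly increases. Second, you assume in passing that the $y$-ordering indexing each chain coincides with the terrain ($x$-)ordering; your entire argument lives in terrain order, and this is clearly how the paper intends the chains to be read, but since the observation's indices refer to that ordering the assumption deserves an explicit sentence. Finally, note that the statement as written asserts the two sets are nonempty intervals; your proof establishes consecutiveness (which is what the later dynamic program uses) but does not address nonemptiness of $N[v]\cap R$ for a vertex $v$ deep on the opposite chain.
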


\begin{figure}
  \centering
\begin{tikzpicture}
  \tikzset{every node/.style={vertex}}
  \node[label=below:$\lambda_0$] (l0) at (0,0) {};
  \node[label=below:$\lambda_1$] (l1) at (-1,2) {};
  \node[label=below:$\lambda_2$] (l2) at (-2,3) {};
  \node[label=below:$\lambda_3$] (l3) at (-3,3.5) {};
  \node[label=right:$\rho_1$] (r1) at (0.5,1.7) {};
  \node[label=-45:$\rho_2$] (r2) at (1,2.5) {};
  \node[label=right:$\rho_3$] (r3) at (2,2.8) {};
  \draw[thick] (l3) -- (l2) -- (l1) -- (l0) -- (r1) -- (r2) -- (r3);
  \draw (l1) -- (r1);
  \draw (l1) -- (r2);
  \draw (l2) -- (r1);
  \draw (l2) -- (r2);
  \draw (l2) -- (r3);
  \draw (l3) -- (r2);
  \draw (l3) -- (r3);
\end{tikzpicture}
\caption{A funnel visibility graph.}
\label{fig:funnel}
\end{figure}
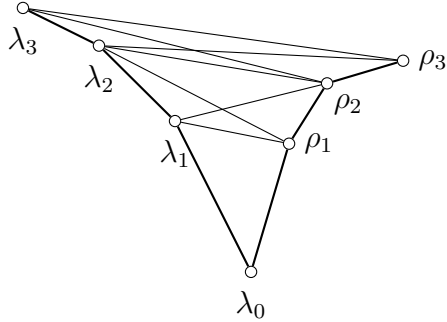

\newcommand{\ldown}[1]{\prescript{}{\downarrow}{#1}}
\newcommand{\rdown}[1]{{#1}_{\downarrow}}
\newcommand{\lup}[1]{\prescript{}{\uparrow}{#1}}
\newcommand{\rup}[1]{{#1}_{\uparrow}}

For~$\ell\in [n_L]$, $r\in[n_R]$, we define $L_{\ell} := \{\lambda_0,\ldots,\lambda_\ell\}$ and $R_{r} := \{\rho_0,\ldots,\rho_r\}$.
We call a vertex subset $U \subseteq (L\cup R)$ a \emph{base} if $U = L_\ell \cup R_r$ for some $\ell, r$.
For a vertex~$v$, define $\lup{v}\in N[v] \cap L$ (resp.\ $\rup{v}\in N[v] \cap R$) as the neighbor with the highest index on~$L$ (resp.\ $R$)
and $\ldown{v}\in N[v] \cap L$ (resp.\ $\rdown{v}\in N[v] \cap R$) as the neighbor on~$L$ (resp.\ $R$) with the lowest index.

We compute a minimum dominating set via dynamic programming. To this end, we first prove the following structural lemma.

\begin{lemma}\label{lem:domination_base}
	Let $W := L_\ell \cup R_r$ be a base and let $d(\ell) \in N[\lambda_\ell]$ and $d(r) \in N[\rho_r]$ be two vertices.
	Then, at least one of the subsets $W \setminus N[d(\ell)]$, $W \setminus N[d(r)]$, or
        $W \setminus (N[d(\ell)] \cup N[d(r)])$ is a base.
\end{lemma}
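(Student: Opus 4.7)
The plan is to use the interval structure given by Observation~\ref{obs:domination_obs} to describe each of the three candidate sets chain-by-chain, and then to invoke the bipartite permutation graph structure of funnel visibility graphs~\cite{ColleyTowers97} to rule out the remaining ``bad'' configuration. I will write $N[d(\ell)]\cap L = \{\lambda_i : a_\ell \le i \le b_\ell\}$ and $N[d(\ell)]\cap R = \{\rho_j : c_\ell \le j \le d_\ell\}$ (allowing the cross-chain interval to be empty), and analogously define endpoints $a_r,b_r,c_r,d_r$ for $d(r)$. Because $\lambda_\ell \in N[d(\ell)]$ the interval $[a_\ell,b_\ell]$ contains $\ell$, so $L_\ell \setminus N[d(\ell)] = L_{a_\ell-1}$ is already a prefix; symmetrically $R_r \setminus N[d(r)] = R_{c_r-1}$ is a prefix.

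First I would dispose of the easy situations. If $\rho_r \in N[d(\ell)]$, then $N[d(\ell)] \cap R_r$ is a suffix of $R_r$, hence $W \setminus N[d(\ell)] = L_{a_\ell-1} \cup R_{c_\ell-1}$ is a base; the symmetric case $\lambda_\ell \in N[d(r)]$ makes $W \setminus N[d(r)]$ a base. Likewise, if $N[d(\ell)]\cap R_r = \emptyset$ or $N[d(r)]\cap L_\ell = \emptyset$, then one of the first two candidates is a base since nothing is removed from the opposite chain. The remaining case therefore has $\rho_r \notin N[d(\ell)]$, $\lambda_\ell \notin N[d(r)]$, and both cross-chain intervals hit $L_\ell$ and $R_r$ nontrivially, forcing $d_\ell < r$ and $b_r < \ell$. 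For $W \setminus (N[d(\ell)] \cup N[d(r)])$ to be a base the removed set on $L$, namely $[\lambda_{a_\ell},\lambda_\ell]\cup[\lambda_{a_r},\lambda_{b_r}]$, must form a suffix of $L_\ell$, which is equivalent to the merging condition $b_r \ge a_\ell - 1$; the analogous $R$-condition is $d_\ell \ge c_r - 1$.

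To establish these two merging conditions I would invoke the characterization of funnel visibility graphs as bipartite permutation graphs with a Hamiltonian cycle~\cite{ColleyTowers97}. Two consequences are used: the only edges inside $L$ (resp.\ $R$) are the chain edges, so a dominator $d(\ell) \in L$ must satisfy $d(\ell) = \lambda_{i_\ell}$ with $|i_\ell - \ell| \le 1$ (similarly for $d(r)$); and the cross-chain endpoints $\lup{\rho_j}, \ldown{\rho_j}, \rup{\lambda_i}, \rdown{\lambda_i}$ are all monotonically nondecreasing. I then case-split by the chains of $d(\ell)$ and $d(r)$. In the mixed case $d(\ell)=\lambda_{i_\ell}$, $d(r)=\rho_{j_r}$, the badness of $d(\ell)$ yields some $\rho_{j''}\sim\lambda_{i_\ell}$ with $j''\le r-1 \le j_r$; since $\lup{\rho_{j''}} \ge i_\ell$, monotonicity forces $b_r = \lup{\rho_{j_r}} \ge i_\ell \ge a_\ell + 1 > a_\ell - 1$. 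The mirror mixed case $d(\ell)\in R$, $d(r)\in L$ is analogous, using monotonicity of $\ldown{\rho_j}$ to bound $a_\ell \le i_r \le b_r$. The same-side cases are impossible: for instance a bad $d(r)=\lambda_{i_r}\in L$ needs $i_r\le\ell-2$ and $\rup{\lambda_{i_r}}\ge r$, which by monotonicity lifts to $\rup{\lambda_{i_\ell}}\ge r$, contradicting the badness of $d(\ell)$. The symmetric inequality $d_\ell \ge c_r - 1$ follows by the same arguments on the $R$ side. The main obstacle is the case analysis itself, in particular identifying the correct monotonicity chain to invoke for each combination of chain sides; once the bipartite permutation structure is in hand, each case reduces to one or two applications of monotonicity.
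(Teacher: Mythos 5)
Your proof is correct and follows essentially the same route as the paper's: first dispose of the cases where $N[d(\ell)]$ reaches $\rho_r$ (or misses $R_r$ entirely) or $N[d(r)]$ reaches $\lambda_\ell$, then rule out the same-chain configuration via monotonicity, and finally handle the cross-chain configuration by showing the two removed neighborhoods merge into suffixes of $L_\ell$ and $R_r$. The only cosmetic difference is that the paper closes the cross-chain case by deriving the edge $\{d(\ell),d(r)\}$ from the X-property and letting the interval structure of Observation~\ref{obs:domination_obs} finish the argument, whereas you derive the interval-merging inequalities directly from monotonicity of the neighborhood endpoints (itself a consequence of the X-property).
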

\begin{proof}
If $\ind(\rup{d(\ell)}) \geq r$, then $W\setminus N[d(\ell)]$ is clearly a base, since by \Cref{obs:domination_obs} $N[d(\ell)]\cap L_\ell$ and $N[d(\ell)]\cap R_r$ are sets of consecutive vertices.
Similarly, if $\ind(\lup{d(r)}) \geq \ell$, then $W\setminus N[d(r)]$ is a base.

Otherwise (that is, $\ind(\rup{d(\ell)}) < r$ and $\ind(\lup{d(r)}) < \ell$), if $d(\ell)$ and $d(r)$ are both from the same chain (say $L$),
then $\ind(d(\ell)) > \ind(d(r))$ (since otherwise $\ind(\lup{d(r)}) \geq \ell$) and therefore $\ind(\rup{d(\ell)}) \geq \ind(\rup{d(r)}) \geq r$, which is a contradiction.

Hence, assume $d(\ell)$ and $d(r)$ are from different chains.
If $d(\ell) \in L$ and $d(r) \in R$, then we must have $\ind(\lup{d(r)}) \leq \ell - 1 \leq \ind(d(\ell))$ and $\ind(\rup{d(\ell)}) \leq r - 1 \leq \ind(d(r))$.
Then, $d(\ell)$ and~$d(r)$ are connected by an edge (by the \xprop{}).
If $d(r)\in L$ and $d(\ell)\in R$, then $\ind(\rup{d(r)}) \geq r > \ind(\rup{d(\ell)}) \geq \ind(d(\ell))$ and $\ind(\lup{d(\ell)}) \geq \ell > \ind(\lup{d(r)}) \geq \ind(d(r))$, 
so $d(\ell)$ and $d(r)$ share an edge (by the \xprop{}).
It follows that $W \setminus (N[d(\ell)] \cup N[d(r)])$ is a base (using \Cref{obs:domination_obs}).
\end{proof}

\begin{theorem}
  \textsc{Dominating Set} is solvable in~$\bigO(n^4)$ time on funnel visibility graphs.
\end{theorem}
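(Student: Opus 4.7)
The plan is a dynamic program over bases. For $\ell\in\{-1,0,\ldots,n_L\}$ and $r\in\{-1,0,\ldots,n_R\}$, with the convention $L_{-1}=R_{-1}:=\emptyset$, I would define $f(\ell,r)$ as the minimum size of a subset of $V(G)$ that dominates $L_\ell\cup R_r$. The answer to \textsc{Dominating Set} is then obtained by comparing $f(n_L,n_R)$ to $k$. As a preprocessing step I would compute the horizon values $\ldown{v},\lup{v},\rdown{v},\rup{v}$ for every vertex in $\bigO(n^2)$ total time by scanning sorted adjacency lists; by \cref{obs:domination_obs} these four numbers fully encode $N[v]\cap(L\cup R)$.

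The recurrence for $\ell,r\geq 0$ guesses how the two topmost base vertices $\lambda_\ell$ and $\rho_r$ are covered:
\[
f(\ell,r) \;=\; \min\!\Bigl\{\min_{d\in N[\lambda_\ell]\cap N[\rho_r]}\!\! 1 + f(\ldown{d}-1,\rdown{d}-1),\;\min_{(d_\ell,d_r)} 2 + f(\ell',r')\Bigr\},
\]
where the second minimum ranges over pairs with $d_\ell\in N[\lambda_\ell]\setminus N[\rho_r]$ and $d_r\in N[\rho_r]\setminus N[\lambda_\ell]$. For each such pair, case~(c) of \cref{lem:domination_base} guarantees that the residue $L_{\ell'}\cup R_{r'}:=(L_\ell\cup R_r)\setminus(N[d_\ell]\cup N[d_r])$ is a base, and from the proof of that lemma one reads off $\ell'=\min\{\ldown{d_\ell},\ldown{d_r}\}-1$ and $r'=\min\{\rdown{d_\ell},\rdown{d_r}\}-1$ in $\bigO(1)$ time. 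Straightforward one-sided analogues handle the boundary states $f(\ell,-1)$ and $f(-1,r)$, with base case $f(-1,-1)=0$.

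For correctness I must show $f(\ell,r)$ equals the true minimum $m(\ell,r)$. The direction $f\geq m$ is immediate since every term on the right-hand side manifestly constructs a valid dominating set. For $f\leq m$, I would fix a minimum dominating set $S$ of $L_\ell\cup R_r$ and pick $d_\ell\in S\cap N[\lambda_\ell]$ and $d_r\in S\cap N[\rho_r]$ (possibly equal). If $d_\ell\in N[\rho_r]$, then \cref{obs:domination_obs} implies $(L_\ell\cup R_r)\setminus N[d_\ell]$ is a base, dominated by $S\setminus\{d_\ell\}$, so the first minimum attains a value at most $|S|$ at $d:=d_\ell$; the case $d_r\in N[\lambda_\ell]$ is symmetric. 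Otherwise $d_\ell\neq d_r$, both lie outside the opposite top's neighborhood, case~(c) of \cref{lem:domination_base} applies, and the residue is a base dominated by $S\setminus\{d_\ell,d_r\}$, so the second minimum attains a value at most $|S|$.

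The anticipated obstacle is precisely the worry that removing closed neighborhoods from a base need not produce a base --- this is entirely addressed by \cref{lem:domination_base}, which is why the guessing above is restricted to pairs of dominators of the two tops. Given the lemma, the remainder is mechanical. The running time follows by counting: there are $\bigO(n^2)$ states, the single-dominator minimum iterates over $\bigO(n)$ choices, the two-dominator minimum over $\bigO(n^2)$ pairs, and each transition takes $\bigO(1)$ time, yielding $\bigO(n^4)$ overall.
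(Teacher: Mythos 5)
Your overall strategy --- a dynamic program over bases, justified by \cref{lem:domination_base} --- is the paper's strategy, but your case split is on the wrong condition, and this breaks soundness of the recurrence. \Cref{lem:domination_base} guarantees that $W\setminus(N[d(\ell)]\cup N[d(r)])$ is a base only in the residual case where $\ind(\rup{d(\ell)})<r$ \emph{and} $\ind(\lup{d(r)})<\ell$; in that case the \xprop{} forces $d(\ell)$ and $d(r)$ to be adjacent, which is exactly what makes their two neighborhoods overlap into a single suffix of each chain. You instead branch on whether $d_\ell\in N[\rho_r]$ and $d_r\in N[\lambda_\ell]$. These conditions are not equivalent: a dominator $d_\ell$ of $\lambda_\ell$ can have $\ind(\rup{d_\ell})\ge r$ while $\ind(\rdown{d_\ell})>r$, i.e., it ``reaches past'' $\rho_r$ on the right chain without seeing it. Such a pair lands in your second minimum even though the lemma's case (c) does not apply, and the union residue need not be a base. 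Concretely, take a deep funnel with $\ell=7$, $r=1$, where $\rho_1$ sees only $\lambda_0,\lambda_1$ on the left chain and $\lambda_6$ sees no right vertex below $\rho_2$. Since every non-bottom vertex is reflex, $N[\lambda_6]\cap L=\{\lambda_5,\lambda_6,\lambda_7\}$. The pair $(d_\ell,d_r)=(\lambda_6,\rho_1)$ satisfies your side conditions, your formula gives $\ell'=\min\{\ind(\ldown{\lambda_6}),\ind(\ldown{\rho_1})\}-1=-1$ and $r'=-1$, so your recurrence records the value $2+f(-1,-1)=2$ --- but $\{\lambda_6,\rho_1\}$ leaves $\lambda_2,\lambda_3,\lambda_4$ undominated, and the true minimum here is $3$. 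So the claim that ``every term on the right-hand side manifestly constructs a valid dominating set'' fails, and $f$ can undershoot $m$. (Your completeness direction survives, since $L_{\ell'}\cup R_{r'}$ is always a subset of the true residue; it is only soundness that breaks.)

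The fix is the paper's case analysis: for a guessed pair $x\in N[\lambda_\ell]$, $y\in N[\rho_r]$, first test $\ind(\rup{x})\ge r$ and recurse on $\bigl(\ind(\ldown{x})-1,\;\min\{r,\ind(\rdown{x})-1\}\bigr)$ using $x$ alone (the $\min$ with $r$ handles exactly the ``reaches past without touching'' situation, where $\rho_r$ is deferred to the subproblem); symmetrically test $\ind(\lup{y})\ge\ell$; and only when both tests fail use the pair $\{x,y\}$, which is then guaranteed by \cref{lem:domination_base} to leave a base behind. With that correction the counting you give ($\bigO(n^2)$ states, $\bigO(n^2)$ guesses, $\bigO(1)$ per transition) yields the claimed $\bigO(n^4)$ bound.
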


\begin{proof}
We define $D(\ell, r)$ to be a minimum-size subset of~$L\cup R$ that dominates $L_\ell \cup R_r$.
Clearly, $D(\ell,r)$ has to contain some $x \in N[\lambda_\ell]$ and~$y \in N[\rho_r]$.
We try out all possible choices for~$x$ and~$y$. For a fixed choice, there are three cases for possible candidate sets:
If $\ind(\rup{x}) \geq r$, then the candidate set is
$$\{x\} \cup D(\ind(\ldown{x})-1, \min\{r, \ind(\rdown{x})-1\}).$$
Otherwise, if $\ind(\lup{y}) \geq \ell$, then the candidate set is
$$\{y\} \cup D(\min\{\ell, \ind(\ldown{y})-1\}, \ind(\rdown{y})-1).$$
Finally, if $\ind(\rup{x}) < r$ and $\ind(\lup{y}) < \ell$, then the candidate set is
$$\{x, y\} \cup D(\min\{\ind(\ldown{x}), \ind(\ldown{y})\}-1).$$
Note that \Cref{lem:domination_base} guarantees that the above three cases are well-defined, that is, we always consider dominating sets of bases in the recursion.
The recursion terminates with $D(-1,-1):=\emptyset$.
To compute~$D(\ell,r)$, we keep the minimum-size candidate set over all possible choices for~$x$ and $y$.
Hence, a minimum dominating set~$D(n_L,n_R)$ can be computed in~$\bigO(n^2\cdot n^2)$ time.
\end{proof}

\section{Conclusion}

Several open questions remain.
Most prominently, a precise characterization of terrain visibility graphs (and their polynomial-time recognition) still remains open.
It might also be interesting to give a characterization of induced subgraphs of terrain visibility graphs. Note that these clearly still satisfy the X-property for some vertex ordering, but not necessarily the bar-property.
For example, it is open whether all unit interval graphs can appear as induced graphs.

As regards algorithmic questions, polynomial-time solvability of \textsc{Dominating Set}
on terrain visibility graphs is open.
One might also improve the running time for \textsc{Dominating Set} on funnel visibility graphs.
Furthermore, a fast algorithm for finding shortest paths with respect to Euclidean distances on arbitrary induced subgraphs of terrain visibility graphs might be of interest. 
In general, it is worth to search for more efficient algorithms to compute graph characteristics used in practice such as clustering coefficients or centrality measures.

{
\raggedright
\printbibliography
}

\end{document}